\def\BibTeX{{\rm B\kern-.05em{\sc i\kern-.025em b}\kern-.08em
    T\kern-.1667em\lower.7ex\hbox{E}\kern-.125emX}}
\newtheorem{theorem}{Theorem}
\newtheorem*{remark}{Remark}
\newtheorem{lemma}{Lemma}
\newtheorem{corol}{Corollary}
\newcommand\ColorBox[1]{\textcolor{#1}{\rule{2ex}{2ex}}}
\begin{document}

\title{
Distributed Control of Charging for Electric Vehicle Fleets under Dynamic Transformer Ratings
}

\author{\IEEEauthorblockN{Micah Botkin-Levy},
\IEEEauthorblockN{Alexander Engelmann},
\IEEEauthorblockN{Tillmann M{\"u}hlpfordt},
\IEEEauthorblockN{Timm Faulwasser},
\IEEEauthorblockN{Mads Almassalkhi}

\thanks{This work was partly supported by the U.S. Department of Energy's Advanced Research Projects Agency - Energy (ARPA-E) award DE-AR0000694.  M. Botkin-Levy and M. Almassalkhi are with the Department of Electrical and Biomedical Engineering at the University of Vermont, Burlington, VT, USA. M. Almassalkhi is co-founder of startup company Packetized Energy, which is commercializing Packetized Energy Management.  Emails: \textit{\{mbotkinl, malmassa\}@uvm.edu}.
A. Engelmann and T. M\"uhlpfordt are with the Institute for Automation and Applied Informatics at the Karlsruhe Institute of Technology. Emails: \textit{\{alexander.engelmann, tillmann.muehlpfordt\}@kit.edu}.
T. Faulwasser is with the Institute for Energy Systems, Energy Efficiency and Energy Economics, Department of Electrical Engineering and Information Technology, TU Dortmund University, Dortmund, Germany. Previous affiliation:  Institute for Automation and Applied Informatics at the Karlsruhe Institute of Technology.  Email: \textit{timm.faulwasser@ieee.org}}.
}

\maketitle

\begin{abstract}
Due to their large power draws and increasing adoption rates, electric vehicles (EVs) will become a significant challenge for electric distribution grids. 
However, with proper charging control strategies, the challenge can be mitigated without the need for expensive  grid reinforcements. This manuscript presents and analyzes new distributed charging control methods
to coordinate EV charging under nonlinear transformer temperature ratings. Specifically, we assess the trade-offs between required data communications, computational efficiency, and optimality guarantees for different control strategies based on a convex relaxation of the underlying nonlinear transformer temperature dynamics. 
Classical distributed control methods such as those based on dual decomposition and alternating direction method of multipliers (ADMM) are compared against the new Augmented Lagrangian-based Alternating Direction Inexact Newton (ALADIN) method and a novel low-information, look-ahead version of packetized energy management (PEM). These algorithms are implemented and analyzed for two case studies on residential and commercial EV fleets. 
 Simulation results validate the new methods and provide insights into key trade-offs.

\end{abstract}

\begin{IEEEkeywords}
EV charging, distributed optimization, packet-based coordination, ADMM, ALADIN, dual decomposition
\end{IEEEkeywords}

\section{Introduction}

  As renewable generation is increasingly deployed, powering our transportation system from the electric grid, instead of fossil fuels, will reduce emissions and climate change impacts. In addition, falling lithium-ion battery prices~\cite{hodges_2018} 
  and low maintenance costs~\cite{zart_2018} 
  will further increase adoption rates for both residential and commercial EVs. 
  However, there are certain challenges associated with the increased adoption of electric vehicles. 
  Specifically, uncoordinated charging from electric vehicles can lead to demand that exceeds the rating of distribution substation power transformers~\cite{osti_1430826}
  . Such MVA-scale transformer has its cores immersed in mineral oil for improved heat transfer. However, EVs will increase the loading on a transformer and result in a higher hot-spot temperature, which is the transformers highest internal temperature. The hot-spot temperature is a major factor in transformer wear-and-tear and aging as the hot oil will break down the winding insulation faster~\cite{xfrm1}.  To accurately model the transformer hot-spot temperature dynamics, a high-order, non-linear thermodynamic model, such as the IEEE Standard C57.91-1995 (e.g., \textit{Clause~7} and \textit{Annex~G}) is often used~\cite{ieeestd_annexG}. 
  
  Thus, it is desirable to manage the charging rate of EVs with respect to the transformer's hot-spot temperature limit and EV-specific objectives and constraints, which can be formulated as a multi-period scheduling problem. Due to a potentially large number of EVs and a long (over-night) prediction horizon, this scheduling problem can be computationally intensive and require full state information, which can raise data \textit{privacy} concerns. Techniques such as primal or dual decomposition are helpful in decoupling a large scheduling problem with coupling constraints into many smaller problems. Two classical algorithms for this purpose are dual decomposition and Alternating Direction Method of Multipliers (ADMM). In this manuscript, we present two novel distributed methods and compare them against the two classical methods
  in terms of how effectively they converge to a solution (i.e., \textit{processing}), the required data communications (i.e., \textit{privacy}), and optimality of the solution (i.e., \textit{performance}).

  For a general comparison of non-centralized control techniques, please see~\cite{distControl1}. There are numerous papers that utilize dual decomposition~\cite{dualEVC1,dualEVC3} and the ADMM approaches~\cite{admmEVC1,admmEVC2,admmEVC3,admmEVC4,admmEVC5,compEVC} to solve various EV charging problem formulations. Other works employ novel and creative approaches, such as~\cite{singleControl1,singleControl2}, which coordinate EV charging under static transformer and voltage constraints using dual decomposition with reactive power compensation~\cite{singleControl1} and a shrunken-primal-dual sub-gradient algorithm that achieves valley-filling (grid-centric) objectives~\cite{singleControl2}. In 
  ~\cite{gametheory, gametheory2, gametheory3}, the authors leverage game theoretic approaches for large-scale populations of EVs, where the average charging dynamics can be steered to a globally optimal solution with fast convergence on the order of 1-100 iterations depending on system parameters.
 
The above works focus on specific information scenarios, such as full information, shared (neighbors), or decentralized (non-shared) information. However, with increased interest in controlling EV charging comes a growing concern for protecting EV owners' information. This can be achieved by minimizing or eliminating the need to communicate information to a central coordinator and, instead, use peer-to-peer technologies to enable transactive energy trading~\cite{blockchain1, blockchain2}. 

 In this paper, we compare the \textit{privacy} offered by the classical algorithms of Dual Ascent and ADMM and new EV charging algorithms based on ALADIN and PEM. This comparison is based on protecting valuable customer information, such as personal travel schedules.
To the best of the authors' knowledge, there is limited work that develops and compares non-centralized EV charging algorithms subject to dynamic capacity constraints. For example,~\cite{compEVC} computes the optimal scheduling of EVs under static capacity constraints and compares the trade-off between the convergence speed and the amount of communication required. However, the study only  considers different combinations of two similar algorithms and neglects  the nonlinear transformer temperature dynamics. 
  
 Furthermore, while much of literature focuses on residential EV charging, fewer papers consider charging needs of fleets or hubs of commercial EVs, such as school buses or delivery trucks, which engender different charging models. One paper aggregates EVs and optimizes the lowest electricity charging cost solution under linearized power flow constraints~\cite{hub1}. Another studies time-of-use pricing for a parking garage of EVs~\cite{hub2}. Other works coordinate aggregated EVs for use as a virtual battery~\cite{evStorage1,evStorage2} or for frequency control~\cite{evFreq} without considering individual EVs or local grid constraints. Herein, we develop a new energy-based fleet charging model that incorporates charging requirements of the individual EVs.

  With this work, we build on the initial receding-horizon model predictive control (MPC) approach from~\cite{EVC}, but employ and analyze a convex relaxation of a practical nonlinear model for the transformer temperature dynamics and augment analysis with two novel, distributed EV charging schemes.  While most previous works on predictive electric vehicle charging (EVC) control focuses on one method for a specific setting, this manuscript also compares multiple distributed methods and studies the trade-offs between information sharing, performance, and computational processing requirements. Specifically, this paper leverages a distributed optimization method with quadratic convergence, the new augmented Lagrangian-based alternating direction inexact Newton (ALADIN) method~\cite{ALADIN} and it proposes a new iteration-free, packet-based coordination scheme, packetized energy management (PEM)~\cite{pemEVC,almassalkhi2018chapter}. These different methods have hitherto not been developed or analyzed for the EVC problem under dynamic coupling constraints. Note that prior work on PEM for EVs only considered static charging constraints and defined device priorities based on the charging constraint rather than the device's local energy state~\cite{pemEVC}. Thus, to incorporate the dynamic constraints within PEM, we first extend~\cite{almassalkhi2018chapter}'s device-driven, locally defined, energy-based prioritization scheme to incorporate EVs' desired states of charge and departure times. In addition, we augment the coordinator from~\cite{almassalkhi2018chapter} with a look-ahead, mixed-integer quadratically constrained quadratic program (MIQCQP) to account for the temperature dynamics and packet requests. Finally, we present both residential and commercial EV charging case studies to compare the role of information across different EVC methods. While the EVC problem is technically challenging, it is also of immediate practical relevance for EV fleet operators~\cite{ups}.
  
  In Section~\ref{sec:models}, we formulate the nonlinear, thermal transformer model and local EV user energy/power constraints and in Section~\ref{sec:pwl} present a convex reformulation, which is rigorously analyzed. Then, in Section~\ref{sec:distributed}, we develop two new, non-centralized EVC algorithms, namely ALADIN and PEM, and briefly discuss the practical considerations facing a utility or a third-party coordinator/aggregator. We present two case studies in Sections~\ref{sec:evc} and~\ref{sec:hub} to validate our methods against conventional methods from literature and to serve as a comparison in Section~\ref{sec:comp}. We conclude the paper with a summary of the paper and recommendations for future research directions in Section~\ref{sec:conc}.

\begin{table}[t]
    \centering
     \caption{EVC parameters}    
     \resizebox{\columnwidth}{!}{
     \begin{tabular}{l l c c}
     \toprule
     \textbf{Variable} & \textbf{Description} & \textbf{Domain} & \textbf{Units}\\
      \midrule
     \multicolumn{4}{c}{\textit{System-wide parameters}} \\
     \midrule
     $N$ & Number of EVs & $\mathbb{Z}_+$ & - \\
     $T_a(k)$ & Ambient temperature at time $k$ & $\mathbb{R}_{+}$ & $^\circ$C \\
     $i_d(k)$ & Secondary background current at time $k$  & $\mathbb{R}_{+}$ & kA\\
    $R$ & Primary-secondary voltage ratio  & $(0,1]$ & -\\
     $T^\text{max}$ & Transformer temperature limit & $\mathbb{R}_{+}$ & $^\circ$C \\
     $\gamma$ & Ohmic losses-to-temp & $\mathbb{R}_{+}$ &  $^{^\circ C}/_{A^2}$\\
      $\tau$ & Temp time constant & $\mathbb{R}_{+}$ & -\\
      $\rho$ & Ambient-to-temp losses & $\mathbb{R}_{+}$ & -\\
     $K$ & Optimization horizon length & $\mathbb{Z}_{++}$ &  \# of time steps\\
     $\Delta t$ & Time step length & $\mathbb{R}_{++}$ & Seconds \\
      \midrule
     \multicolumn{4}{c}{\textit{EV-specific parameters for EV $n$}} \\
     \midrule
     \rule{0pt}{3ex}    
     $i_n^\text{max}$ & Current limit  & $\mathbb{R}_{+}$ & A \\
     $\alpha_n$ & Charging efficiency & [0,1]  &-  \\
     $\beta_n$ & Battery capacity & $\mathbb{R}_{+}$  & J \\
     $\eta_n$ &  Normalized battery size & $\mathbb{R}_{+}$ &  $^{1}/_{A}$\\ 
     $\bar s_n$ & Minimum required SoC & [0,1] & -\\
     $\bar k_n$ & Latest time step to reach $\bar s_n$  & [0,$K$] &  -\\
     $q_n$,$r_n$ & Penalties on partial SoC, current draw & $\mathbb{R}_{+}$ & -\\
     \bottomrule
    \end{tabular}
    }
    \label{tbl:notation}
\end{table}

\section{Problem Formulation}\label{sec:models}


\begin{figure}[t]
    \centering
    \resizebox{\columnwidth}{!}{
       \includegraphics{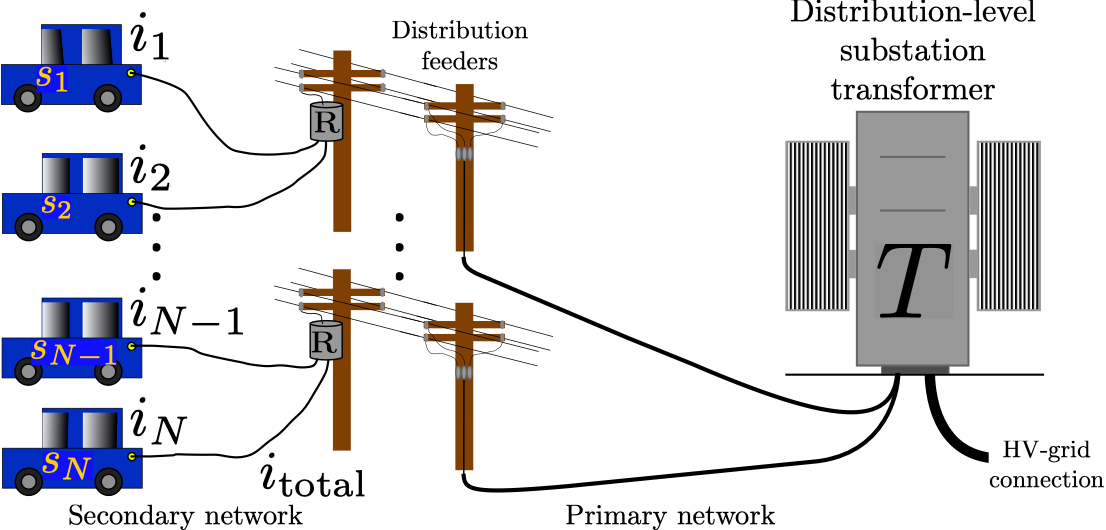}
        }
    \caption{Cartoon of the residential system setup, where the substation transformer's low-voltage (LV) side is in the primary network ($V_\text{pri}=8320$V) while the chargers reside in the secondary network ($V_\text{sec}=240$V).}
    \label{fig:sysModel}
\end{figure}

 Consider a finite collection of $N$ EVs with charging stations that are served by the same distribution-level substation transformer. Between a charger in the secondary network and the substation transformer in the primary network is a pole-top transformer, as shown in Fig.~\ref{fig:sysModel}. A dynamic transformer temperature model is used in the EVC formulation to keep the transformer hot-spot temperature below its limits while satisfying the local EV user constraints. The goal is to regulate the charging of all EVs within the transformer temperature limit. This gives rise to an MPC problem that is described at each time instance by a finite-horizon optimal control problem (OCP) of the following partially separable form: 
 \begin{align*}
     \min_{\textbf{u}(0),\hdots, \textbf{u}[K-1]} \: & \sum_{n=1}^N \sum_{k=0}^{K-1} q_n (x_n(k)-x_n^\text{ref}(k))^2  + r_n (u_n(k))^2 \\ 
     \end{align*}
     \begin{align*}
     \text{s.t. } \: & \mathbf{f}_k \left(\mathbf{x}(k+1), \mathbf{x}(k),\mathbf{u}(k)\right)=\mathbf{0} \\
     &\mathbf{h}_k \left(\mathbf{x}(k),\mathbf{u}(k)\right) \leq \mathbf{0} \\
    & x_n(k) \in \mathcal{X}_{n,k},
    \,\, u_n(k) \in \mathcal{U}_{n,k},
    \,\, x_n(0) = x_{\text{meas},n}, 
 \end{align*}
for $k=0,\hdots, K-1$ and $n=1,\hdots, N+1$, where $\mathbf{x}(k)\doteq [x_1(k),\hdots, x_N(k), x_{N+1}(k)]^\top \in  \mathbb{R}^{N+1}$ represents the states of charge (SoCs) for all $N$ EVs and the transformer temperature over the $K$ timesteps in the prediction horizon\footnote{In this work, the control and prediction horizons are assumed identical as the focus herein is on developing and comparing different novel algorithms.}. The control inputs $\mathbf{u}(k)\doteq [u_1(k), \hdots, u_N(k)]^\top\in \mathbb{R}^{N}$ include the EV charging rates. Functions $\mathbf{f}_k$ and $\mathbf{h}_k$ account for (coupling) inequality and equality constraints at time $k$, respectively, and are described in the next sections along with the objective function. 
The objective function's parameters $q_n\ge 0, r_n > 0$ represent the EV owner's preference for achieving the state reference value with minimal control effort. The compact, convex sets $\mathcal{X}_{n,k}$, $\mathcal{U}_{n,k}$ capture box constraints for states and inputs at time $k$.  
Table~\ref{tbl:notation} describes key parameters.

 \subsection{Transformer Dynamics and Constraints}
 
We consider the nonlinear hot-spot temperature model developed and validated in~\cite{paulXFRM}, 
\begin{equation}
    \dot T(t) = a L(t)^2 - b [T(t)-\tilde T_a(t)] + \tilde c, \label{xfrmCTdyn}
\end{equation}
where $T(t)$ represents the hot-spot temperature, $L(t)$ is the apparent power demand (volt-ampere or VA), and $ \tilde T_a(t)$ is the ambient temperature at time $t \in \mathbb{R}_+$. The  constant coefficients $a$, $b$, and $\tilde c$ represent the effects of conduction, convection, and radiation, respectively. These parameters may be estimated from experimental data (as done in~\cite{paulXFRM} with genetic programming) or from manufacturer spec sheets. In the present paper, the parameters are scaled versions of those in ~\cite{paulXFRM} such that the resulting model matches the timescale of the temperature responses given in spec sheets for the MVA-scale transformers used herein.


Using a zero-order hold with time-step $\Delta t$ for  the inputs and exact discretization, the discrete-time dynamics are 
\begin{align}
    T(k+1)=\tau T(k)+\tilde \gamma (L(k))^2 +\rho ( \tilde T_a(k)+c), \label{eq:tempDynLoad}
\end{align}
for $k=0,\hdots,K-1$ and measured initial temperature of $T(0)=T_\text{meas}$, where  $\tau=e^{-b \Delta t}$, $\rho=1-\tau$, $c=\frac{\tilde c}{b}$ and $\tilde \gamma=\frac{\rho}{b}$.

Since the control variables of interest are the EV charging currents, we will use a current-based model instead of a power-based model\footnote{Since the focus herein is on EV scheduling algorithms, the power system details associated with multi-phase distribution feeders and transformers, voltage fluctuations, and power factors are not discussed.}. Thus, we decompose $L(k)\doteq i^\text{pri}_\text{total}(k) V_\text{pri}$, where $i^\text{pri}_\text{total}(k)$ is the total RMS current magnitude from the primary network side of the transformer at time $k$ and supplied at fixed, rated RMS voltage $V_\text{pri}$.  Since the EV charger is supplied from the secondary network, its RMS voltage rating is $V_\text{sec} \doteq R V_\text{pri}$, where $R\in(0,1]$ is the pole-top transformer's fixed voltage ratio. The current $i^\text{pri}_\text{total}(k)$ is then the total \textit{reflected} current from the secondary network, i.e., $i_\text{total}^\text{pri}(k) = R i_\text{total}(k)$, where  $i_\text{total}(k)$ is composed of the background demand, $i_\text{d}(k)$, and all EV charging currents, $i_n(k)$, in the secondary network and
\begin{align*}
    i_\text{total}(k) = i_d(k) + \sum_{n=1}^N i_n(k).
\end{align*}
Thus, we can rewrite~\eqref{eq:tempDynLoad} in terms of   $i_\text{total}$ as
\begin{equation}
    T(k+1)=\tau T(k)+\gamma (i_\text{total}(k))^2 +\rho (T_a(k)), \label{xfrmDTdyn}
\end{equation} 
for $k=0,\hdots,K-1$ where $\gamma =\tilde \gamma V_{\text{pri}}^2R^2$ and $T_a(k)\doteq\tilde T_a(k)+\frac{\tilde c}{b}$.
In addition, the temperature $T(k)$ is constrained by the hot spot temperature limit $T^\text{max}$.

\subsection{EV Dynamics and Constraints}
The continuous-time, normalized charging dynamics of vehicle $n$ with current $i_n(t)$ are modeled as
\begin{equation}
    \dot s_n(t)= \Tilde{\eta}_n i_n(t),\quad s_n(t) \in [0,1],
\end{equation}
 where $\Tilde{\eta}_n\doteq \frac{\alpha_n}{\beta_n}V_{\text{sec}}$ is the normalizing ratio of the vehicle's charging efficiency ($\alpha_n$) to battery capacity ($\beta_n$) and supplied secondary RMS voltage, $V_{\text{sec}}$. The discrete time equation is
\begin{equation}
  s_n(k+1)=s_n(k)+\eta_n i_n(k), \quad s_n(0)=s_{\text{meas},n},
\end{equation}
for $k=0,\hdots,K-1$ and measured initial state of charge $s_{\text{meas},n}$. Each charger has a maximum current, $i_n^\text{max}$.

\subsection{EV Owner Preferences}
All vehicles are assumed to be available for charging at the beginning of the time period considered and owners have varying requirements for when they need their vehicle. The owners of the devices determine a minimum state of charge ($\bar s_n$) that must be met by a specific time step ($\bar k_n $). The associated constraint for the $n^{th}$ vehicle is
\begin{equation}
    s_n(k+1)\geq \hat s_n(k+1) \doteq
    \begin{cases} \bar s_n &  k+1\geq \bar k_n \\
                    0 & \text{else}
    \end{cases}.
\end{equation}

In addition, the user can set their preference for the trade-off between charging their EV quickly and minimizing local battery wear and control effort. This is achieved by selecting parameters in the objective function and is described next.

\subsection{EVC Control Objective}

 The $n^{th}$ EV owner's charging preference is used in the objective function to prioritize charging rate against the state of charge as
\begin{equation}
    J_n(\mathbf{i}_n,\mathbf{s}_n)\doteq \sum_{k=0}^{K-1} q_n(s_n(k+1)-1)^2 + r_n(i_n(k))^2 .
\end{equation}
Specifically, for each vehicle $n$, we define $M_n \doteq \frac{q_n}{r_n}\eta_n^2$ based on a user-defined ratio $\frac{q_n}{r_n}$, and fixed EV parameter $\eta_n>0$. This ratio $M_n$ will be used in the next section to provide sufficient conditions under which we can guarantee that a suitable convex relaxation is tight. These conditions are necessary due to the fundamental tradeoff in the objective function between reaching full charge quickly (large $M_n$ to maximize SoC) and keeping the battery charge-rate low (small $M_n$ to minimize control effort), which serves as a proxy for wear and tear.  
  This objective function is similar to a linear quadratic regulator (LQR) that penalizes deviations in SoC from unity and large control efforts. Summing over all $N$ vehicles yields the total cost metric, which we seek minimize in the optimization problem.


\subsection{Centralized Optimal Control Problem}

The open-loop OCP arises from the combination of the above constraints and objective function for all EVs and time steps. It reads
\begin{subequations} \label{eq:OCP}
    \begin{align}
\min_{i_n(k)}\;\;& \sum_{n=1}^N \sum_{k=0}^{K-1} q_n (s_n(k+1)-1)^2  + r_n (i_n(k))^2  \label{eq:objFun}\\
    \text{s.t.}~
    &T(k+1)=\tau T(k)+\gamma (i_\text{total}(k))^2 +\rho T_a(k) \label{cNLc} \\
    &s_n(k+1)=s_n(k)+\eta_n i_n(k) \label{cNLsn}\\
    &i_\text{total}(k) = i_d(k)+\sum_{n=1}^N i_n(k) \label{cNLcoup}\\
    &T(k+1)\leq T^\text{max}, \label{cPWLtmax} \\
    & s_n(k+1)\in[\hat s_n(k+1),~1] \label{cNLsnmin}\\
    & i_n(k)\in[0,~i_n^\text{max}] \label{cNLin} \\
    & T(0)=T_{\text{meas}},~s_n(0)=s_{\text{meas},n} 
    \end{align}
\end{subequations}
for all $k=0,\hdots, K-1$ and $n=1,\hdots, N$. This is a non-convex Nonlinear Program (NLP) due to the nonlinear~\eqref{cNLc}. 
Note that the only coupling constraint between the transformer and EV dynamics is~\eqref{cNLcoup}. Previous work in~\cite{EVC} used a linearized temperature model to simplify the coupling.

\section{Convexification of Centralized EVC Problem} \label{sec:pwl}
To overcome the non-convexity of~\eqref{cNLc}, we consider two different relaxations: an epigraph relaxation, which yields a second-order cone program (SOCP), and a piecewise linear (PWL) relaxation. The former  replaces the quadratic equality~\eqref{cNLc} with the linear equality and quadratic inequality
\begin{align}
    T(k+1)&=\tau T(k)+\gamma e(k)+\rho T_a(k) \label{eq:tempRelax}\\
    e(k)&\geq (i_\text{total}(k))^2, \label{qcqpRelax}.
\end{align}
 Under this relaxation, problem~\eqref{eq:OCP} becomes a SOCP. The benefit of the this approach is that
 if~\eqref{cPWLtmax} is strictly active at time $k$ then~\eqref{qcqpRelax} is satisfied with equality for all prior time steps and we recover the nonlinear model exactly.
 This is guaranteed by the following theorem and corollary.
\begin{theorem}[Main Result]\label{thm1}
Given fixed EV parameters $r_n\ge0$, $\eta_n, q_n >0$. If, at optimality, $\exists n, k$ for which $i_n(k)<i_n^\text{max}$ (i.e., an EV charger is throttled) and SoC satisfies
$$ 
s_n(k+1) < \left \{
\begin{array}{rr}
1 & \text{if $r_n=0$}\\
\frac{M_n+s_n(0)}{M_n +1} & \text{if $r_n>0$}
\end{array}
\right.
, 
$$ 
then $ e(l)=(i_\text{total}(l))^2 \,\, \forall l\le k$ in \eqref{qcqpRelax}. 
\end{theorem}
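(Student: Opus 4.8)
The plan is to argue by contradiction. Suppose that at an optimal point the hypotheses hold---some charger $n$ is throttled at time $k$ with $s_n(k+1)$ strictly below the stated threshold---yet the relaxation~\eqref{qcqpRelax} is slack at some index $l^\ast\le k$, i.e.\ $e(l^\ast)>(i_\text{total}(l^\ast))^2$. I would then construct a feasible perturbation that strictly decreases the cost, contradicting optimality, so that no slack index $\le k$ can exist. The guiding intuition is that $e$ enters the problem \emph{only} through the temperature recursion~\eqref{eq:tempRelax}, and the temperature enters \emph{only} through~\eqref{cPWLtmax}, while the objective~\eqref{eq:objFun} does not depend on $e$; hence slack in~\eqref{qcqpRelax} is simply unused thermal headroom that a throttled, under-charged EV would prefer to spend on charging.

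Concretely, I would raise $i_n(k)$ by a small $\epsilon>0$ (admissible because $i_n(k)<i_n^\text{max}$), increasing $i_\text{total}(k)$ by $\epsilon$. To keep~\eqref{qcqpRelax} at time $k$, let $e(k)$ grow by $\Delta e\doteq(i_\text{total}(k)+\epsilon)^2-(i_\text{total}(k))^2$ and simultaneously reduce $e(l^\ast)$ by $\delta$, exploiting the assumed strict slack. Unrolling~\eqref{eq:tempRelax} shows that $e(j)$ influences $T(m)$ through the factor $\tau^{\,m-1-j}$; since $\tau=e^{-b\Delta t}\in(0,1)$, the choice $\delta=\tau^{\,l^\ast-k}\Delta e\ge\Delta e$ cancels the induced temperature rise for every $m>k$ and strictly lowers $T(m)$ for $l^\ast<m\le k$, so all limits~\eqref{cPWLtmax} are preserved. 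Because only EV $n$'s current and the auxiliary temperature variables move, the total cost change equals the per-EV change $\Delta J_n$.

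By optimality, the first-order variation of $J_n$ along this direction must be nonnegative; retaining only the $m=k+1$ state penalty and discarding the remaining sign-definite terms reduces it to $r_n i_n(k)\ge q_n\eta_n\,(1-s_n(k+1))$. Substituting $i_n(k)=(s_n(k+1)-s_n(k))/\eta_n$ and rearranging with $M_n=\tfrac{q_n}{r_n}\eta_n^2$ gives $s_n(k+1)\ge\tfrac{M_n+s_n(k)}{M_n+1}$ for $r_n>0$ (and $s_n(k+1)\ge1$ for $r_n=0$). Since the SoC is nondecreasing, $s_n(k)\ge s_n(0)$, whence $s_n(k+1)\ge\tfrac{M_n+s_n(0)}{M_n+1}$, contradicting the hypothesis. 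This also explains the exact form of the threshold: it is the one-step cost-minimizing target SoC, so being below it certifies that additional charging is locally profitable.

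I expect the main obstacle to be preserving feasibility when the SoC saturates downstream: if $s_n(m)=1$ for some $m>k+1$, a bare increase of $i_n(k)$ would violate $s_n\le1$. The fix is to pair it with a compensating decrease of $i_n(m^\ast-1)$ at the first saturation time $m^\ast$ (where the current is strictly positive), which leaves $s_n(m)$ unchanged for $m\ge m^\ast$ and raises it only on $[k+1,m^\ast-1]$; one must then verify that the extra $-r_n i_n(m^\ast-1)\le0$ term only strengthens the inequality above, so the same threshold bound survives. It remains to confirm that the minimum-SoC bounds $\hat s_n$ and the current bounds are respected (they are, since the SoC only increases on the affected range) and to dispatch $r_n=0$ separately. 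Equivalently, the argument can be cast through the KKT system, where tightness of~\eqref{qcqpRelax} at $l$ is equivalent to a strictly positive temperature-limit multiplier at some time $\ge l$, and the threshold condition is precisely what forces that multiplier to be positive.
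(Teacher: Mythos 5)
Your argument is correct, but it follows a genuinely different route from the paper. The paper proves tightness through KKT analysis: it writes the stationarity conditions, establishes a monotonicity lemma for the multiplier $\mu_e^k$ of~\eqref{qcqpRelax} (namely $\mu_e^l \ge \mu_e^k$ for $l\le k$, obtained by unrolling the temperature adjoint with the factor $\tau$), and then shows via the $i_n$-stationarity condition that the theorem's hypotheses force $\mu_e^k>0$, whence complementary slackness gives tightness at $k$ and all earlier steps. You instead run the primal, exchange-argument version of the same reasoning: assume slack at some $l^\ast\le k$, spend that thermal headroom on the throttled EV via a feasible perturbation, and contradict first-order optimality. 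Your weight $\delta=\tau^{\,l^\ast-k}\Delta e$ is precisely the primal shadow of the paper's Lemma~1, and the closing algebra (reducing to $r_n i_n(k)\ge q_n\eta_n(1-s_n(k+1))$, substituting $i_n(k)=(s_n(k+1)-s_n(k))/\eta_n$, and invoking monotonicity of the SoC sequence to pass from $s_n(k)$ to $s_n(0)$) is identical to the paper's. What your route buys: it is more elementary (no Slater assumption, no multiplier bookkeeping) and it sidesteps the paper's implicit step of dividing by $i_\text{total}(k)>0$; it also makes the economic meaning of the threshold transparent. What it costs: the primal feasibility bookkeeping is heavier --- the downstream SoC-saturation repair you sketch (offsetting at the first saturation time $m^\ast$ with $i_n(m^\ast-1)>0$) does work and the extra $-r_n i_n(m^\ast-1)$ term indeed only strengthens the inequality, but this must be carried out explicitly for the proof to be complete; and the dual formulation delivers Corollary~\ref{corol1} essentially for free, whereas your approach would need a separate argument there, as you note in your final sentence.
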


The proof is based on KKT analysis and is provided in the appendix. 
Note that when $r_n>0$, Theorem~\ref{thm1} provides a method to choose $q_n$ and $r_n$ based on constant $\eta_n$ and a desirable upper threshold on state of charge. Ideally, one would chose a threshold of $1$, but that requires $r_n=0$, which may not be reasonable. Instead, one could solve for $M_n$ by setting $M_n/(M_n+1) > \overline{s}_n$ (ignoring the initial state, $s_n(0)$), which then neatly embeds the user-defined QoS constraint into the objective function parameters. For example, if $\overline{s}_n = 0.8$, one can choose $M_n>4$, which implies $q_n/r_n > \frac{4}{\eta_n^2}$.

\begin{remark}[Tightness of the SOCP relaxation]
  At optimality, it may not be the case that any EV $n$ satisfies Theorem~\ref{thm1}'s conditions: $i_n(k) < i_n^\text{max}$ and $s_n(k+1) < \frac{M_n+s_n(0)}{M_n+1}$ for some timestep $k$. That is, the optimal solution may not be tight, if for all EVs $n$ and for entire prediction horizon $k$ either 
\begin{enumerate}[label=\Roman*.]
\item  $i_n(k)= i_n^\text{max}$ or
\item  $s_n(k+1) \ge \frac{M_n+s_n(0)}{M_n+1}$
\end{enumerate} 
In case~I, EVs are all charging at their maximum charge rates and not throttled, which indicates under-utilized capacity from the transformer. For case~II, the trade-offs from the objective function imply that any EVs that may be throttled must have a sufficiently high state of charge and are not negatively impacted by the transformer's capacity. Together,~I and~II imply that~\eqref{cPWLtmax} may not be strictly active, so the temperature state in~\eqref{eq:tempRelax} and the convex relaxation~\eqref{qcqpRelax} can be removed without affecting the optimal solution. Thus, outside of Theorem~\ref{thm1}'s conditions, the convex relaxation has no impact on the optimal solution, which ensures that no feasible solution for the relaxed SOCP formulation will lead to overheating of the transformer.
\end{remark}


Finally, to relate the transformer's temperature state and safety limit~\eqref{cPWLtmax} to the tight convex relaxation above, we present the following corollary. Together with Theorem~\ref{thm1}, this corollary guarantees that if the temperature limit~\eqref{cPWLtmax} is strictly active at time $k+1$ then the convex relaxation is tight for all prior timesteps.

\begin{corol}[Temperature limit]\label{corol1}
For the SOCP, at optimality, $k+1$ is the last instance for which~\eqref{cPWLtmax} is strictly active, if and only if, $k$ is the largest integer for which~\eqref{qcqpRelax} is tight.
\end{corol}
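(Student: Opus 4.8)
The plan is to prove both implications simultaneously through a stationarity (KKT) analysis of the SOCP, focusing only on the variables that couple the temperature subsystem, namely the auxiliary variables $e(k)$ and the temperatures $T(k+1)$. I would attach a sign-free multiplier $\nu_k$ to each dynamics equation~\eqref{eq:tempRelax}, a multiplier $\lambda_k\ge0$ to each temperature limit~\eqref{cPWLtmax} (so that $\lambda_k>0$ is precisely the statement that the limit is \emph{strictly} active at time $k+1$), and a multiplier $\mu_k\ge0$ to each relaxation inequality~\eqref{qcqpRelax} (so that, by complementary slackness, $\mu_k>0$ forces $e(k)=(i_\text{total}(k))^2$, i.e.\ that the relaxation is tight at $k$). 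Crucially, $e(k)$ and $T(k+1)$ enter neither the objective nor the EV dynamics, so the relation between $\mu_k$ and the $\lambda_j$ decouples entirely from the per-vehicle conditions.

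The key computation is two stationarity conditions. Differentiating the Lagrangian with respect to $e(k)$ gives $\mu_k=-\gamma\nu_k$, and differentiating with respect to $T(k+1)$ gives the backward recursion $\nu_k=\tau\nu_{k+1}-\lambda_k$ for $k<K-1$ with terminal condition $\nu_{K-1}=-\lambda_{K-1}$. Solving the recursion and substituting yields the closed form
\[ \mu_k=\gamma\sum_{j=k}^{K-1}\tau^{\,j-k}\lambda_j, \]
or, equivalently, the telescoping identity $\mu_k=\gamma\lambda_k+\tau\mu_{k+1}$ (with $\mu_K\doteq0$). Since $\gamma>0$, $\tau=e^{-b\Delta t}\in(0,1)$, and every $\lambda_j\ge0$, this formula shows that $\mu_k$ is a geometrically weighted tail sum of the temperature multipliers, whence $\mu_k>0$ if and only if $\lambda_j>0$ for some $j\ge k$.

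From this equivalence both directions follow immediately. If $k+1$ is the last strictly active instance of~\eqref{cPWLtmax}, then $\lambda_k>0$ and $\lambda_j=0$ for all $j>k$; the tail sum then gives $\mu_k=\gamma\lambda_k>0$ (tight at $k$) and $\mu_j=0$ for all $j>k$, so $k$ is the largest index at which the relaxation is forced tight. Conversely, if $k$ is the largest such index, then $\mu_k>0$ and $\mu_{k+1}=0$; the telescoping identity $\gamma\lambda_k=\mu_k-\tau\mu_{k+1}$ gives $\lambda_k=\mu_k/\gamma>0$, while $\gamma\lambda_j=\mu_j-\tau\mu_{j+1}=0$ for $j>k$, placing the last strictly active temperature limit exactly at $k+1$. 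The same tail-sum argument shows $\mu_l>0$ for every $l\le k$, which recovers the stronger reading that the relaxation is tight for \emph{all} prior timesteps, consistent with Theorem~\ref{thm1}.

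The main obstacle is conceptual rather than computational and concerns how one reads ``tight'' beyond index $k$. Because $e(k)$ is absent from the objective, once $\lambda_j=0$ for all $j\ge k+1$ the auxiliary variables $e(j)$ with $j>k$ are not pinned down by optimality — any value up to the one saturating $T(j+1)=T^\text{max}$ is optimal — so primal tightness there is a degenerate, non-unique artifact. The clean statement is therefore in terms of the multiplier $\mu_j$ (equivalently, strict complementarity): ``largest index at which the relaxation is tight'' must mean the largest index at which it is \emph{forced} tight, $\mu_j>0$. I would make this identification explicit and note that it is exactly the notion of tightness relevant for recovering the nonlinear dynamics in~\eqref{cNLc}.
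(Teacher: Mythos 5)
Your proof is correct and follows essentially the same route as the paper's: both rest on the stationarity conditions in $e(k)$ and $T(k+1)$, which yield the geometric tail-sum identity $\mu_e^{k}=\eta\sum_{t=k+1}^{K}\tau^{t-k-1}\mu_T^{t}$ (your $\mu_k=\gamma\sum_{j=k}^{K-1}\tau^{j-k}\lambda_j$), from which both implications are read off. Your closing remark that ``tight'' should be understood as $\mu_e^{k}>0$ (forced tight via complementary slackness) rather than mere primal equality is a genuine refinement the paper leaves implicit, and worth stating.
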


 Despite the guarantee of tightness for the relaxed model at optimality, the quadratic constraints increase the complexity of complementary conditions and begets numerical difficulties. To overcome this challenge, a piecewise linear (PWL) approach is used to formulate the nonlinear problem as a quadratic program (QP), which improves numerics of the problem significantly. An additional benefit of the PWL approximation is that the PWL segments dominate the quadratic model and, thus, is designed to over-estimate the transformer current as shown in Fig.~\ref{fig:Relax}. This over-estimate is a function of the number of segments and creates a conservative prediction of the transformer temperature.  Therefore, for the remainder of this manuscript we focus on the PWL implementation.

\subsection{Piecewise Linear Approximation}
 Define $e(k)$ as a PWL approximation of $i_\text{total}(k)^2$ with $M$ segments of equal width $\Delta i \doteq\frac{I^\text{max}}{M}$ as seen in Fig.~\ref{fig:Relax}, where $I^\text{max}$ is an upper bound on transformer current, then
 \begin{equation} \label{eq:PWL}
     i_\text{total}(k)^2 \le \text{PWL} \{i_\text{total}(k)^2\}=: e(k) =\sum_{m=1}^M\alpha_m i_m^\text{PW}(k),
 \end{equation}
where $i_m^\text{PW}(k) \in[0,\Delta i]$ represent auxillary PWL variables for each segment $m$ a time $k$ such that $\sum_{m=1}^M i_m^\text{PW}(k) \doteq i_\text{total}(k)$ and slope parameters $\alpha_m \doteq (2m-1)\Delta i$.

Note that this PWL approximation relaxes the adjacency conditions\footnote{Adjacency conditions enforce $i_m^\text{PW}(k) > 0 \Rightarrow i_p^\text{PW}(k) = \Delta i, \, \forall p < m$} that are usually enforced for the PWL segments, which avoids a mixed-integer formulation and creates the blue convex relaxation shown in Fig.~\ref{fig:Relax}. Using this directly in the transformer constraint turns the NLP into a relaxed QP:
\begin{equation}
    T(k+1)=\tau T(k)+\gamma \left( \Delta i \sum_{m=1}^M(2m-1)i_m^\text{PW}(k) \right) +\rho T_a(k). \label{cPWLt}
\end{equation}

\begin{figure}
    \centering
    \begin{tikzpicture}[>={Stealth[inset=1pt,length=8pt,angle'=40,round]},scale=0.75]
        \coordinate (O) at (0,0);
        \node[] (fLab) at (7,5) {$i^2(k)$};
        \draw[->] (0,0) -- (8,0) coordinate[label = {below:$i_\text{total}(k)$}] (xmax);
        \draw[->] (0,0) -- (0,6) coordinate[label = {left:$e(k)$}] (ymax);
        \draw[->,very thick,scale=8,domain=0:0.8,smooth,variable=\x,black,thick] plot ({\x},{\x*\x}); 
        \draw[very thick,scale=8,draw=blue!200,fill=blue!15,opacity=0.8] (0,0) -- node[above,] {$\alpha_1$}  (0.25,0.25*0.25) --  node[above] {$\alpha_2$}(0.5,0.5*0.5)--  node[above] {$\alpha_3$} (0.75,0.75*0.75)--(.5,0.5)-- node[below right,blue]{PWL}(0.25,5/16)--(0,0);
        \draw[fill=gray!60,scale=8, opacity=0.2] plot[smooth, samples=100, domain=0:.85] (\x,\x*\x) -|node[below right,black,opacity=0.75] {SOC} (0,0) --  cycle;
        \draw[shift={(0,0)}] (0pt,2pt) -- (0pt,-2pt)  node[below] {0};
        \draw[shift={(2,0)}] (0pt,2pt) -- (0pt,-2pt)  node[below] {$\Delta i$};
        \draw[shift={(4,0)}] (0pt,2pt) -- (0pt,-2pt)  node[below] {$2\Delta i$};
        \draw[shift={(6,0)}] (0pt,2pt) -- (0pt,-2pt)  node[below] {$3\Delta i$};
    \end{tikzpicture}
\caption{Relaxing the non-convexity $e(k) = (i_\text{total})^2$ with a PWL approximation that does not enforce adjacency conditions (blue) and a conic relaxation (gray). Note that the PWL approximation assumes that $ I^\text{max} = 3\Delta i$.} \label{fig:Relax}
\end{figure}
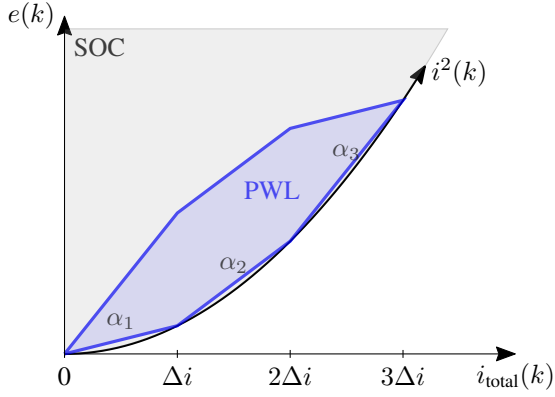

\begin{remark}[Upper bound on PWL error] Since we are using equal width segments, the maximum error between the PWL approximation and the actual $i^2$ is just the maximum distance between the linear segment ($\text{PWL}(i)$) and the quadratic curve ($q(i)$) at the midpoint (i.e. $\frac{\Delta i}{2}\doteq\frac{I^\text{max}}{2M}$), 
\begin{align}
    \epsilon_i^\text{max}&= \text{PWL}\left(\frac{\Delta i}{2}\right) - q\left(\frac{\Delta i}{2}\right) =\frac{(I^\text{max})^2}{2M^2} -\left(\frac{I^\text{max}}{2M}\right)^2 \\
    & \Rightarrow \epsilon_i^{max} =\frac{(I^\text{max})^2}{4M^2}.
\end{align}
Multiplying by $\gamma$ provides the upper bound on the corresponding temperature error:
\begin{equation}
    \epsilon_T^\text{max}=\frac{\gamma (I^\text{max})^2}{4M^2}.
\end{equation}
Even for a large current $I^\text{max}=0.72 kA$ with $\gamma=15.74 {^{^\circ C}/_{(\text{kA})^2}}$, and $M=6$ segments, the maximum error between a PWL's linear prediction of the transformer temperature ($T_\text{PWL}$) and the quadratic temperature ($T_q$) for a single time step is $\epsilon_T^{max}\doteq T_\text{PWL}(k+1) - T_q(k+1)= 0.057^\circ C$ when the convex relaxation is tight. While this temperature error accumulates over time steps in the open loop prediction, it is also discounted over time since $\tau<1$. Therefore, the piecewise linear approximation provides a feasible and robust estimate of the nonlinear temperature dynamics.  
\end{remark}

\subsection{Centralized PWL Problem}
The PWL relaxation provides an approximation of the transformer dynamics in~\eqref{cPWLt} and replaces~\eqref{cNLc}. In addition, the current coupling constraints between the PWL current segments and the EV currents are:
\begin{equation}
    i_d(k)+\sum_{n=1}^N i_n(k) =\sum_{m=1}^M i_m^\text{PW}(k) \doteq i_\text{total}(k). \label{cPWLcoupl}
\end{equation}

Also, we enforce limits on the variable associated with each linear segments:
\begin{equation}
        i_m^\text{PW}(k)\in[0,\Delta i].\label{cPWLipw}
\end{equation}

The PWL formulation adds one more set of box constraint than the NLP formulation and replaces the optimization variables $\mathbf i_\text{total}  \in \mathbb{R}^K$ with $\mathbf i^{PW} \in \mathbb{R}^{MK}$. This open-loop optimal control problem is then implemented in receding-horizon fashion as illustrated in Fig.~\ref{fig:MPC}.

\begin{remark}[Extending Theorem~\ref{thm1} to the PWL formulation]
Since the PWL relaxation overestimates the non-convex equality constraint, it is contained within the SOCP relaxation (as in Fig.~\ref{fig:Relax}). This ensures, under the same conditions of Theorem~\ref{thm1} and Corollary~\ref{corol1}, that the optimal solution from the PWL formulation is tight relative to the PWL segments. Thus, the PWL formulation can successfully predict and regulate the transformer's dynamic temperature trajectory relative to its temperature limit. For a detailed treatment of the PWL relaxation, please see~\cite{madsPWLMPC}. 
\end{remark}

\begin{figure}
    \centering
   \resizebox{\columnwidth}{!}{
      \begin{tikzpicture}[>={Stealth[inset=1pt,length=8pt,angle'=40,round]}]
            \node[dashed,rectangle, draw=black, thick,minimum height=10em, minimum width=14em, rounded corners] (MPC) at (0,0){};
            \node[rectangle, draw=black,minimum height=7em,minimum width=12em, rounded corners,label=above:{Open Loop Optimization}] (Opt) at (0,0){};
            \node[rectangle, draw=black,minimum height=4em,minimum width=12em, rounded corners,label=above:{QP Optimization Solver}] (G) at (0,0){};
            \node[rectangle, draw=black] (PWL) at (0,0.35){PWL Transformer Model};
            \node[rectangle, draw=black] (evmodel) at (0,-0.35){EV Charging Model};
            \node[rectangle, draw=black] (NLplant) at (7,0.25){Nonlinear Transformer};
            \node[rectangle, draw=black, below=0.5cm of NLplant.west, anchor=west, minimum width=10em](evmodel){EV Batteries};
            \node[font=\fontsize{15pt}{15pt}\selectfont] (Ta) at (-3,-1){$T_a$};
            \node[font=\fontsize{15pt}{15pt}\selectfont] (Id) at (-3,1){$i_d$};
            \draw[->,thick] (Opt.east)  -- node[above, font=\fontsize{15pt}{15pt}\selectfont] {$\{i_n(0)\}_{n=1}^N$}(5.2,0);
            \draw[->,thick] (Ta.east)  -- (Opt);  
            \draw[->,thick] (Id.east)  -- (Opt);  
            \draw[->,thick] (evmodel.south)  -- (7,-2.5) --node[above, font=\fontsize{15pt}{15pt}\selectfont] {$T_{\text{meas}}$, $\{s_{\text{meas},n}\}_{n=1}^N$} (0,-2.5) --(Opt.south);  
        \end{tikzpicture}
        }
    \caption{The OCP with feedback. The OCP is used with the ALADIN, ADMM, and dual decomposition methods and employs the PWL approximation of the transformer's nonlinear current-temperature relations in the OCP formulation while the plant model represents the non-linear transformer.}
    \label{fig:MPC}
\end{figure}
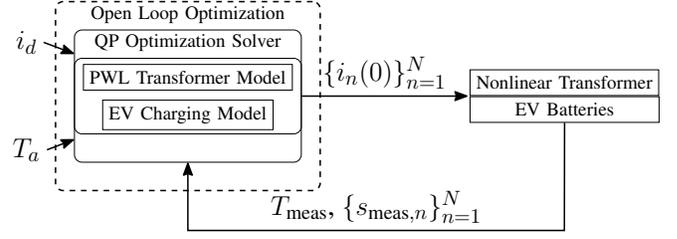

\section{Non-centralized Implementation} \label{sec:distributed}
The centralized problem can be decomposed into $N$ sub-problems if it were not for the \textit{coupling constraints}~\eqref{cPWLcoupl}. Thus, in this section, we present different distributed and decentralized charging algorithms. Specifically, ALADIN and PEM represent two novel contributions for EV charging control while the other two methods (Dual Ascent and ADMM) serve as base cases for comparison. 

The iterative ALADIN, Dual Ascent, and ADMM schemes employ the partial Lagrangian with respect to~\eqref{cPWLcoupl} as follows:

\begin{align}
    &\mathcal{L}(\mathbf{i}_n,\mathbf{s}_n,\mathbf i_m^\text{PW},\mathbf{\lambda})= \notag\\
    &\phantom{=}\sum_{n=1}^N J_n(\mathbf{i}_n,\mathbf{s}_n)+\mathbf{\lambda}^\top\left(\mathbf i_d+\sum_{n=1}^N \mathbf{i}_n  -\sum_{m=1}^M \mathbf i_m^\text{PW} \right) \nonumber \\
    &=\sum_{n=1}^N\left( J_n(\mathbf{i}_n,\mathbf{s}_n)+\mathbf{\lambda}^\top \mathbf{i}_n\right)+\mathbf{\lambda}^\top(\mathbf i_d-\sum_{m=1}^M \mathbf i_m^\text{PW}), \label{sepL}
\end{align}
where $\mathbf{\lambda}\in \mathbb{R}^{K}$ are the Lagrange multipliers associated with~\eqref{cPWLcoupl}. From~\eqref{sepL}, the Lagrangian can be separated into local EV variables $\{\mathbf{i}_n,\mathbf{s}_n \} \in \mathbb{R}^{2NK}$ and transformer variables $\{\mathbf i^{PW}\}\in \mathbb{R}^{MK}$, which turns~\eqref{sepL} into a separable objective function subject to decoupled constraints. 
This means that the optimization problem can be solved in distributed fashion by iteratively updating $\mathbf{\lambda}$ for which we develop and present 
Dual Ascent, ADMM and ALADIN algorithms. We also provide a non-iterative packet-based coordination scheme adapted from PEM. Each algorithm has different requirements for the transformer, EVs, and coordinator problems as illustrated in Fig.~\ref{fig:distG}. 
Next, we will discuss each scheme and since the dual decomposition and ADMM are two common methods, details have been omitted in this manuscript.

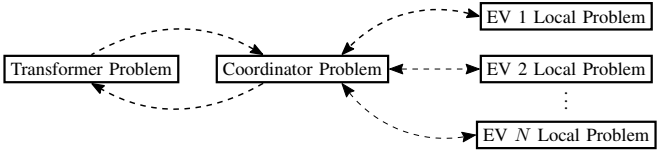
\begin{figure}[t]
    \centering
    \resizebox{\columnwidth}{!}{
        \begin{tikzpicture}[>={Stealth[inset=1pt,length=8pt,angle'=40,round]},
            squarednode/.style={rectangle, draw=black, very thick, minimum size=5mm},
            ]
            \node[squarednode] (C) at (0,0){Coordinator Problem};
            \node[squarednode] (EV1) at (5,1){EV 1 Local Problem};
            \node[squarednode] (EV2) at (5,0){EV 2 Local Problem};
            \node[] (EV3) at (5,-0.5){$\vdots$};
            \node[squarednode] (EV3) at (5,-1.25){EV $N$ Local Problem};
            \node[squarednode] (XFRM) at (-4,0) {Transformer Problem};
            
            \draw[dashed,->,thick] (C.200) to [bend left]  (XFRM.south);
            \draw[dashed,<-,thick] (C.160) to [bend right]  (XFRM.north);
            \draw[dashed,<->,thick] (EV1.west) to [bend right] (C.20);
            \draw[dashed,<->] (C.east) to  (EV2.west);
            \draw[dashed,<->] (C.340) to [bend right] (EV3.west);
        \end{tikzpicture}
    }
    \caption{Distributed EVC coordination scheme. Synchronous schemes require the communications from all local problems before proceeding with the coordinator step.}
    \label{fig:distG}
\end{figure}

\subsection{ALADIN}
ALADIN  is a relatively new distributed optimization algorithm~\cite{ALADIN}. It has been considered, among other things, for optimal power flow problems~\cite{kit:engelmann18b}.
The method decomposes the centralized optimization problem by having each agent solve its local problem based on the primal iterate guess of primal variables and the Lagrange multipliers of the coupling constraints. 
The local solutions together with first- and second-order information are provided to the coordinator to update the primal variables and multipliers by solving a centralized (but simple) quadratic program (QP) to foster consensus. 
This setup allows predictive ALADIN to achieve quadratic convergence locally, which greatly reduces the number of iterations needed and is highly desirable.


The first two steps of ALADIN are shown in Algorithm~\ref{ALADalg} and solve local optimization problems for: 1) the EVs and 2) the transformer. Here, the primal variables for the EVs are collected in $ \mathbf x_n^\top  = \left ((\mathbf i_n)^\top , \; (\mathbf s_n)^\top \right )$ and the primal variables for the tranformer in 
$\Delta \mathbf x_{N+1}^\top  = \left ((\mathbf i_{N+1})^\top , \; (\mathbf T)^\top\right  )$ with $(\mathbf i_{N+1})^\top  = (\mathbf i^{PW\top}_1,\dots,\mathbf i^{PW\top }_M)^\top $, where the latter
is due to the PWL formulation~\eqref{eq:PWL}.
As mentioned, the primal solution from each local EV and transformer problem is shared with the coordinator. 
In addition, the gradient and Hessian of the Lagrangian relative to $\mathbf x_n$ is denoted $g_{x_n}$ and $H_{x_n}$. 
Note that the gradients of the box constraints $C_{\bar{x}_n}^{(p)}$ are constant and given by zero vectors with $-1$ or $1$ in the column corresponding to a primal variable for which a box constraint is active.
So here it suffices to communicate an index set of the active constraints instead of a full matrix. 
In Step 3), the coordinator combines the local information into a coordination QP to update the auxiliary variables and the dual variables. 
The specific ALADIN variant used for the EV charging OCP 
is shown in Algorithm~\ref{ALADalg}. 
A slight alteration to the ALADIN formulation is used which changes the linearized expressions, $C_{\bar{x}_n}^{(p)} \Delta\mathbf{x}_n$, in Step 3) to be inequalities from their original equality constraints. 
This relaxation allows the local variables to move asymmetrically away from its bound instead of fixing all variables that are at their upper or lower limit.

\begin{algorithm}[t]
    \caption{ALADIN for EV charging.}
    \label{ALADalg}
    \begin{algorithmic}[1]
        \Statex \textbf{Initialization}: Initial ($p\equiv 0$) guess of dual multiplier $\mathbf{\lambda}^{(0)}$, of all four auxiliary variables $\{{\mathbf{i}}_{n}^{(0)},{\mathbf{s}}_{n}^{(0)},\mathbf{i}_{N+1}^{PW\, (0)},\mathbf{T}^{(0)}\}$ and tuning parameters $\{\rho_{\text{ALAD}},\mu, \sigma_z,\sigma_t,\left \{\sigma_{i,n},\sigma_{s,n} \right \}_{ n \in 1,\dots,N}\}$ .
        \Statex \textbf{Repeat for} $p$:
        \begin{enumerate}
            \item \textit{Solve local EV problems:} for each $n \in 1,\dots,N$ 
                \begin{align*}
                    \bar {\mathbf{i}}&^{(p)}_n=\arg \min_{\mathbf{i}_n,\mathbf{s}_n}\;\;J_n(\mathbf{i}_n,\mathbf{s}_n)+(\lambda^{(p)})^\top\mathbf{i}_n+ \\&\frac{\rho_{\text{ALAD}}\sigma_{i,n}}{2}(\mathbf{i}_n-  {\mathbf{i}}_{n}^{(p)})^2+\frac{\rho_{\text{ALAD}}\sigma_{s,n}}{2}(\mathbf{s}_n- {\mathbf{s}}_{n}^{(p)})^2  \notag \\
                    &\text{s.t.}\;\;\; \eqref{cNLsn}, \eqref{cNLsnmin}, \eqref{cNLin}	 \notag
                \end{align*}
            \item \textit{Solve local transformer problem:}
                \begin{align*}
                    (\bar {\mathbf{i}}&_{N+1})^{(p)}=\arg \min_{\mathbf{i}_{N+1},\mathbf T}\;\;-(\lambda^{(p)})^\top \sum_{m=1}^M \mathbf i^{PW}_{m}+\\&\frac{\rho_{\text{ALAD}}\sigma_{Z}}{2}(\mathbf{i}_{N+1}-\mathbf{i}_{N+1}^{(p)})^2\notag+\frac{\rho_{\text{ALAD}}\sigma_T}{2}(\mathbf{T}-\mathbf{T}^{(p)})^2 \\
                    &\text{s.t.}\;\;\; \eqref{cPWLt}, \eqref{cPWLipw}, \eqref{cPWLtmax} \notag
                \end{align*}
            \item \textit{Solve coordinator problem:}
                {\scriptsize
                \begin{subequations}
                    \begin{align*}
                         \min_{\Delta x_n, y}   \: &\sum_{n=1}^{N+1} \left( \frac{1}{2}\Delta\mathbf{x}_n H_{x_n}^{(p)})  \Delta\mathbf{x}_n +g_{x_n}^{(p)}\Delta\mathbf{x_n} \right)+(\lambda^{(p)})^\top y + \frac{\mu}{2}||y||_2^2\\
                        \text{s.t.} \: &\sum_{n=1}^{N} (\bar i_n^{(p)}(k)+\Delta \bar i_n(k)) -\sum_{m=1}^M \left((\bar i_m^\text{PW})^{(p)}(k)+\Delta i_m^\text{PW}(k)\right)=\notag\\&\quad y(k)-i_d(k) \quad |\: \lambda_{QP}(k) \\
                        &\Delta T(k+1)=\tau \Delta T(k)+\gamma \left( \Delta i \sum_{m=1}^M(2m-1)\Delta i_m^\text{PW}(k) \right) \\ 
                        &\Delta s_n(k+1)=\Delta s_n(k)+\eta_n \Delta i_n(k) \text{ for all } k=0,\hdots, K-1 \notag \\
                        &C_{\bar{x}_n}^{(p)} \Delta\mathbf{x}_n \leq 0 \label{jacobC1} \; \text{ for all }\; n=1,\hdots, N+1 \notag
                    \end{align*}
                    \end{subequations}
                    
                    }%
            \item \textit{Termination Criterion:} 
            If
            {\begin{align*}
                \left|\left| \bar i_d(k)+\sum_{n=1}^N \bar i_n(k)^{(p)}-\sum_{m=1}^M (\bar i_m^\text{PW})^{(p)}(k)\right|\right|_1 \leq \epsilon_1  \\ \text{ and }  
                \left|\left|\sigma_{x_n}(\bar {\mathbf{x}_n}^{(p)} -\mathbf{x}_n^{(p)})\right|\right|_1\leq \epsilon_2
            \end{align*}
            then exit with $x^*=x^{(p)}$ and $i_n^\ast(0)$ is implemented in EVs.}
            \item \textit{Update dual variable and auxiliary variables}
                \begin{align*}
                    \mathbf{x}_n^{(p+1)}&=\bar {\mathbf{x}}_n^{(p)}+\Delta \mathbf{x}_n, \quad \ n=1,\hdots, N+1 \\
                     \mathbf \lambda^{(p+1)}&=\mathbf \lambda_{QP}, \qquad 
                    p \rightarrow p+1. \notag
                \end{align*}
        \end{enumerate}
    \end{algorithmic}
\end{algorithm}
 
ALADIN provides a systematic approach to decomposing our large centralized primal formulation into many small, local QPs and a single coordination QP. 
However, despite the few iterations required for convergence (e.g., please see~\cite{ALADIN}), the information required from the sub-problems is significant and the coordinator problem is computationally intensive. For a variant of ALADIN with reduced size of the coordination QP we refer to~\cite{kit:engelmann19a}.
Note that the ALADIN tuning parameters  $\rho_{\text{ALAD}},\mu, \sigma_z,\sigma_t$ and $\sigma_{i,n},\sigma_{s,n} ~\forall n$ have to be chosen initially.

\subsection{Dual Decomposition}
Dual decomposition separates~\eqref{sepL} and creates local QP EV problems, a local LP transformer problem, and updates $\mathbf{\lambda}$ by dual ascent. Standard dual decomposition with dual ascent update for separable problems is used from~\cite{boydDecomp} and the setup is similar to that of~\cite{EVC}, except herein we employ the relaxed PWL model and not a linearized model. 

\subsection{ADMM}
ADMM builds on top of dual decomposition by augmenting the local objective functions using auxiliary variables. A separable ADMM approach is used herein based on~\cite{bertsekas_2015}.

Note that both dual decomposition and ADMM can be expressed as a special case of ALADIN with considerably simplified coordination QP, and, in case of dual decomposition, by choosing all $\sigma_{(.)}$'s to zero additionally (cf.~\cite{ALADIN}). Also, these classical optimization methods  are only used as bases for comparison in the case studies. For these reasons and due to the fact that detailed treatments of these algorithms for EV charging problems are widely available in the literature, we do not state them explicitly here.  

\subsection{PEM with Dynamic Constraints}
PEM represents a computationally and informationally light demand-side coordinating scheme for coordinating DERs (in real-time), such as EV chargers. The scheme uses a stochastic, packet-based approach similar to modern communication networks to dynamically prioritize demand-side resources based on local energy needs~\cite{PEM}. The full PEM algorithm adapted for the EV charging problem approximates the OCP and is described in Algorithm~\ref{PEMalg}. Each local ``packetized'' charger can infer or measure its local \textit{energy need}, which is mapped to a prescribed probability of requesting a fixed-duration ($\delta>0$) packet of energy (e.g., a $\delta=5$ step, constant-ampere charging epoch). The request is submitted to the coordinator,  which takes into account real-time and/or predicted transformer conditions to either accept or reject the packet to maintain the transformer temperature within its limits. To ensure quality-of-service (QoS) for the device owner, opt-out logic enables EVs with immediate energy needs to temporarily exit the scheme and recover their SoC. Algorithm~\ref{PEMalg} is described next.

\subsubsection{Local EV Problem}
The PEM scheme does not require that the EVC agent solves a local optimization problem to manage EV charger demand. Instead, a ``packetized'' EV charger is assumed capable of accurately measuring the EV's SoC, $s_{\text{meas},n}$, and inferring time until departure, $\bar{k}_n$. Based on these two updates, the EV charger calculates its energy need with the ratio 
$$\text{ratio}_n(k) \doteq \frac{\bar s_n -s_{\text{meas},n}(k)}{\eta_n i_n^\text{max} (\bar k_n - k)} \in \mathbb{R}.$$
If the ratio$_n$ $>1$, then the time remaining is not sufficient to provide the desired energy, even if charging for the entire remaining duration. Thus, if ratio$_n$ reaches or initially exceeds unity, then the device will automatically opt out (opt out status denoted by $\text{Req}_n < 0$) and continuously charge until the time of departure. Thus, opting out represents a background disturbance to the fleet of packetized EV chargers, which reduces the number of packets that can be accepted by the coordinator. When $\text{ratio}_n \in [0,1]$, the value is mapped to a probability of requesting a packet over the duration of time step $k$ (request status denoted by $\text{Req}_n \in \{0,1\}$), where a request from EV $n$ is sent to the coordinator with exponential waiting times, e.g, please see Algorithm~\ref{PEMalg}. The probability of requesting a packet depends on the ratio and a pre-specified mean time-to-request (or \textit{mttr}) for a specific ratio value set-point ($\hat r_{\text{set},n} \in (0,1)$). As ratio$_n(k)\rightarrow 0/1$, the probability of requesting a packet during timestep $k$ approaches~$0/1$. Of course, while the charger is ``consuming'' an energy packet, it does not request another packet, so Req$_n(k)$ status is set to the negative of the packet completion timer.

If an EV requests a packet ($\text{Req}_n \equiv 1$) and is notified that its packet is accepted, the EV charges at a pre-specified current for $\delta$ time steps. If the ratio$_n<0$, we denote status by $\text{Req}_n \equiv 2$, which implies that the EV's SoC exceeds its desired (minimum) energy target, which means that the EV's local ``energy need'' has been satisfied and any future requests from this EV is designated a low-priority request.

\subsubsection{Coordinator Problem}
The coordinator receives packet requests and must now accept a proportion of them in such a way as to keep the transformer temperature within limits. Since temperature is a dynamic state and prior work with PEM and EVs focused on static power or current limits one major contribution of this paper is the extension of PEM for scheduling under dynamic state constraints. Thus, this section extends prior work on PEM with a novel, predictive, synchronous coordinator formulation that utilizes an efficient MIQCQP formulation to select which requests are accepted and denied. Note that the predictive model in the coordinator only concerns the transformer temperature relative to changes in demand and does not extend to the fleet's requests or SoC. That is, the coordinator uses a persistent forecast of requests and opt-outs over its prediction horizon, which is just $\delta$ timesteps (e.g., $\delta=2$ timesteps, which is 6~minutes in  Case-study~1). 

To do so, first define the set of all devices that do not request a packet (\text{Req}$_n(k)\equiv 0$) during timestep $k$ as $\mathcal{E}_0$. The EVs that request a packet at time $k$ belong to set $\mathcal{E}_1$. Then, define $\delta$ sets for the devices that are ``locked in'' for future time steps as $\mathcal{Y}_{l}, l\in {k+1, \hdots, k+\delta}$ to capture the groups of EVs still consuming an energy packet or those that opted out earlier. Lastly, define $\mathcal{E}_2$ as the set of EVs that have already reached their desired (minimum) SoC target, but are not fully charged. 

The coordinator's MIQCQP problem is solved in Algorithm~\ref{PEMalg} to determine which EVs have their packet requests accepted ($\text{Resp}_n(k)=1$) and rejected ($\text{Resp}_n(k)=0$). Since the problem looks ahead just a packet length, the prediction horizon is short and the formulation is efficient. The requests from $\mathcal{E}_2$ are de-prioritized by use of a scaling factor ($\omega_E \doteq \min\{1/(NK), 1/(4N)\} << 1$) in the objective function. To ensure a solution always exists, a slack variable is added to the temperature limit and penalized in the objective function ($\omega_S>>1$). 

Finally, 
the MIQCQP depends on the EV chargers' ampacities, $i_n^\text{max}$. The current rating is known exactly when the information is included in the request or may be approximated via data-driven methods. In this work, we assume the former. After solving the MIQCQP, the optimal solution, $\mathbf{u}^\ast_\text{ch}(0) \in \mathbb{R}^N$, represents the EV chargers whose requests were accepted by the coordinator. To reduce the necessary communications in an online implementation of PEM, only EVs whose charger's logic state undergoes a transition (e.g., $\text{Resp}_n(k) \neq \text{Resp}_n(k-1)$) 
are updated by the coordinator. 


The next two sections explore the convex transformer temperature models and EV charging algorithms within two different, but relevant scenarios: residential and commercial fleets of EVs. The latter will require novel modeling of a hub of commercial EVs and then adapt the algorithms to manage the charging of multiple hubs rather than individual EVs. Finally, Section~\ref{sec:comp} provides discussion and comparison between distributed methods employed in case studies~1 and~2.

\algblock[Name]{Local}{End}
\algblock[Name]{Coordinator}{End}
\algblock[Name]{Transformer}{End}

\begin{algorithm}[ht]
    \caption{Look-ahead PEM algorithm}
    \label{PEMalg}
    \begin{algorithmic}
        \Local \textbf{ EV Problem:} compute for each $n\in N$
                \If {Consuming Packet} 
                \State $\text{Req}_n(k)=- \text{ duration remaining for packet}$
            \ElsIf {$s_{\text{meas},n}(k)=1$}  \Comment{EV at 100\%}
                \State $\text{Req}_n(k)=0$ 
            \ElsIf {$ s_{\text{meas},n}(k)\geq \bar s_n$} \Comment{EV is low priority}
                \State $\text{Req}_n(k)=2$
            \Else
                    \State $\text{ratio}_n(k)=  \dfrac{\bar s_n-s_{\text{meas},n}(k)}{\eta_n* i_n^\text{max}(\bar k_n-t)}$ 
                    \If {$\text{ratio}_n(k) \ge 1$}\State{$\text{Req}_n(k)=-\delta$} \Comment{EV opts out}
                    \Else
                          \State $\mu(k)=\frac{1}{mttr}\frac{\text{ratio}_n(k)}{1-\text{ratio}_n(k)}\frac{1- r_{\text{set},n}}{ r_{\text{set},n}}$
                        \State $P_n(k)=\min\{\max \{1-e^{-\mu(k)\Delta t},0\},1\}$
                        \State $ \text{Req}_n(k)= \begin{cases}1, & rand() < P_n(k)\\0, &else \end{cases}$
                    \EndIf
            \EndIf
        \End
        \Coordinator \textbf{ Problem:}\\
            {\small
             Update sets $\mathcal{E}_0, \mathcal{E}_1, \mathcal{E}_2, \mathcal{Y}_k$ and measure $T_{\text{meas}}$\\
            \text{Solve MIQCQP to determine $\mathbf{u}_\text{ch}^\ast(k) \,\, \forall k\in \mathcal{K}_\delta \doteq \{0,\hdots,\delta-1\}$}
                \begin{align}
                    \max_{T_\text{slack},u_n} \: & \sum_{k=0}^{\delta-1} \left( \sum_{n\in {\mathcal{E}}_1} u_n(k) +\omega_E\sum_{n\in\mathcal{E}_2} u_n(k)\right)  - \omega_S T_\text{slack}\\
                    \text{s.t } \: 
                    & T(k+1) = \tau T(k) +\gamma e(k) +  \rho T_a(k) \,\, \forall k \in \mathcal{K}_\delta\\
                    &e(k) \ge \left(i_\text{total}(k)\right)^2 \quad \forall k \in \mathcal{K}_\delta\\
                    & i_\text{total}(k) = i_d(k)+\sum_{n=1}^N u_n(k) i_n^\text{max} \quad \forall k \in \mathcal{K}_\delta\\
                    &T(k+1)\leq T^\text{max} +T_\text{slack} \quad \forall k \in \mathcal{K}_\delta\\
                    &u_n(k)=1 \quad \forall n \in \mathcal{Y}_{k}\,\, \forall k \in \mathcal{K}_\delta\\
                    &u_n(k)\le u_n(k+1) \quad \forall n \in \mathcal{E}_1 \cup \mathcal{E}_2 \quad \forall k \in \mathcal{K}_\delta\\
                    & T(0)=T_{\text{meas}} \\
                    &\sum_{n\in\mathcal{E}_0} \sum_{m=0}^{\delta-1}  u_n(m)=0 \\
                    &u_n(k) \in \{0,1\} \quad \forall n=1,\hdots,N \: \forall k \in \mathcal{K}_\delta
                \end{align}
            }%
            {\small Determine responses to packet requests from EV chargers:} 
            \State $\text{Resp}_n(k)= u_{\text{ch},n}^*(0)\,\, \forall n$ \Comment{From optimal solution.}
        \End
    \end{algorithmic}
\end{algorithm}

\section{Case Study 1: Residential PEV charging} \label{sec:evc}
Now that we have developed several distributed control methods for the EV charging problem, we would like to simulate a scenario and evaluate each method on privacy, performance, and processing metrics. We simulated a scenario with 100 EVs, for the overnight hours of 8PM to 10AM. The rest of the parameters used are shown in Table \ref{table:simP} where the bracket notation $[a, b]$, denotes the range of the variable. For the look-ahead PEM, we use $\delta = 2$ timesteps,  $mttr=\Delta t\delta=2$ timesteps (i.e., $2\times 180=360$s), and $\hat r_{\text{set},n}=0.10$.

\begin{table}
    \begin{center}
    \caption{Simulation parameters for case study~1}
    \scalebox{0.9}{
         \begin{tabular}{l c c c}
         \toprule
         \textbf{Variable} & \textbf{Value} & \textbf{Units} & \textbf{Source}\\
      \midrule
         \multicolumn{4}{c}{\textit{System parameters}} \\
         \midrule
         $N$ & 100 & EVs & \cite{umassEV,evPenetration} \\
         $K$ & 160 & Timesteps & -\\
         $M$ & 6 & Segments & - \\
         $\Delta t$ & 180 & Seconds & \cite{xfrmStep,hotspotXFRM}\\
         $ \tilde T_a(k)$ & $[16, 18]$ & $^\circ$C &-\\
         $ c$ & 29.87 & & \cite{paulXFRM,xfrmStep,hotspotXFRM} \\
         $i_d(k)$ & $[12.1,17.5]$ & kA&\\
         $T^\text{max}$ & 100 & $^\circ$C &  \cite{xfrmGuide,united1991permissible} \\
         $T_0$ & 70 & $^\circ$C& - \\
         $\gamma$ & 0.0131 &  $^{^\circ C}/_{(\text{kA})^2}$ & \cite{paulXFRM,xfrmStep,hotspotXFRM} \\
          $\tau$ & 0.9145 & & \cite{paulXFRM,xfrmStep,hotspotXFRM}\\
          $\rho$ & 0.0855 & & \cite{paulXFRM,xfrmStep,hotspotXFRM}\\
        $R$ & 240/8320 & & -\\
          \midrule
         \multicolumn{4}{c}{\textit{Residential EV parameters}} \\
         \midrule
         $s_{n,0}$ & $[0, 70]$ & \% & \cite{fleetcarma}\\
         $i_n^\text{max}$ & $[12,80]$ & A & \cite{umassEV, teslaWall}\\
         $\alpha_n$ & $[80, 90]$ & \% & \cite{veicEff} \\
         $\beta_n$ & $[40,100]$ & kWh & \cite{evSales,umassEV}\\
         $\bar s_n$ & $[75, 100]$  & \%&-\\
         $\bar k_n$ & [06:00,10:00] & Time &-\\
         $ q_n$, $r_n$ & $[0,50]$, $10$ & &-\\
         \bottomrule
        \end{tabular}
    }
    \label{table:simP}
    \end{center}
\end{table}

\begin{figure}
    \includegraphics[width=\columnwidth]{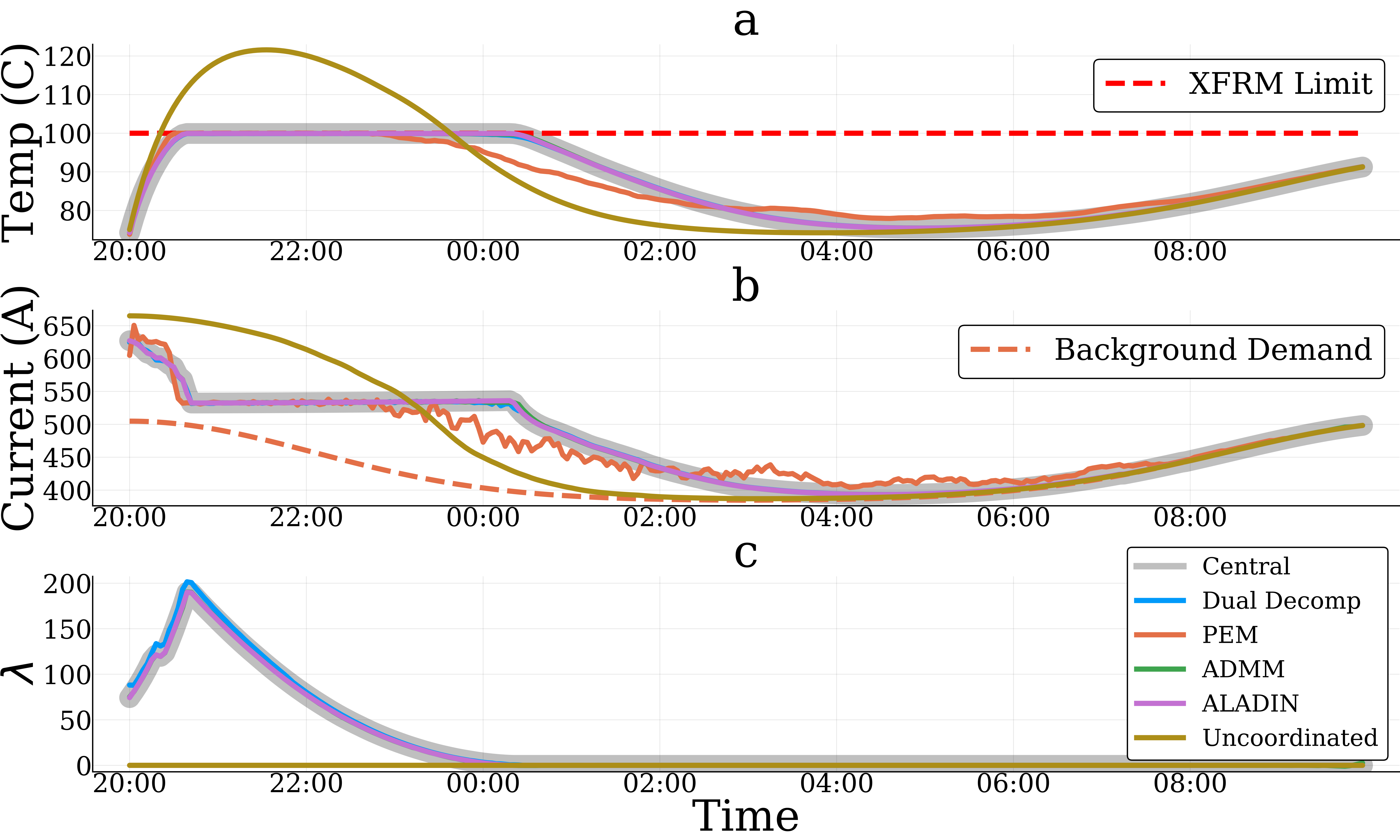}
    \caption{Case Study~1 receding-horizon response. \textit{Top}: Temperature response; \textit{Middle}: Total primary network current demanded from substation transformer and equal to reflected total secondary current ($R i_\text{total}(k)$); \textit{Bottom}: Dual variable of~\eqref{cPWLcoupl}. }
    \label{case1Results}
\end{figure}

\begin{figure}
    \centering
    \includegraphics[width=\columnwidth]{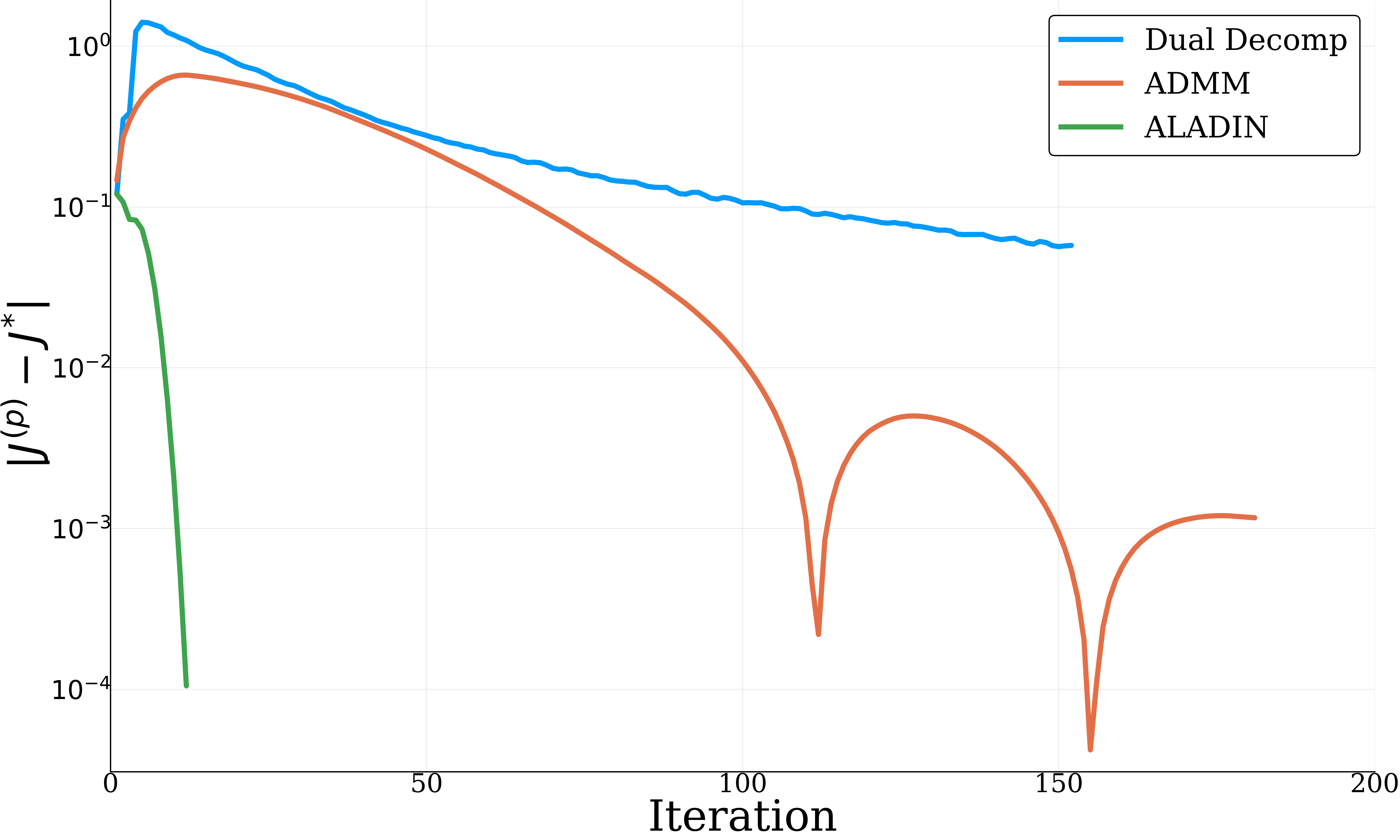}
    \caption{Case Study~1 convergence for first (cold-start) time instance. Objective function values converging to centralized (optimal) value.}
    \label{case1Conv}
\end{figure}



\subsection{Simulation Results}
The OCP is solved in closed-loop to engender an receding-horizon simulation for each non-centralized algorithm, as well as, the centralized formulation is shown in Fig.~\ref{case1Results}. The top two plots show the transformer temperature, (a), and total primary network load at the transformer, (b). The bottom plot, (c), displays the dual multiplier $\lambda$, which is associated with the coupling constraint~\eqref{cPWLcoupl}. In addition to the centralized OCP solution and the solutions from the four non-centralized algorithms, the result of the uncoordinated EV charging is also provided. Clearly, without EV charging control implemented, the transformer temperature exceeds its limit for hours.

The optimal solutions of ADMM and ALADIN are nearly identical to the centralized solution. The convergence of the three iterative schemes in solving the OCP for the first (cold-start) timestep of the receding-horizon simulation is shown in Fig.~\ref{case1Conv}. ALADIN significantly outperforms ADMM, which outperforms the Dual Decomposition. 

For the scenario in Case Study~1, privacy is important as residential EV owners are not inclined to share information about their driving habits, such as time of arrival, departure, and state-of-charge. Thus, since the ALADIN algorithm requires significant information transfer  between EVs and coordinator, ALADIN is not well-suited for residential charging applications. Due to its quadratic convergence, ALADIN may be an ideal approach for solving EV charger problems where privacy is less important and full information is available. A fleet of commercial EVs located at central charging hubs within a large city fits those conditions and is presented next.

\begin{figure}[ht]
    \centering
    \scalebox{.8}{
        \begin{tikzpicture}[>={Stealth[inset=1pt,length=8pt,angle'=40,round]},
        squarednode/.style={rectangle, draw=black, very thick, minimum size=5mm},
        ]
        \node[squarednode] (XFRM) at (0,0){Transformer};
        \node [squarednode] (hub1) at (2,-1){Hub 1};
        \node [squarednode] (hub2) at (2,1){Hub 2};
        \node [squarednode] (hub3) at (4,-1){Hub 3};
        \node [squarednode] (hub4) at (4,1){Hub 4};
        \node [] (dots) at (5,0){$\hdots$};
        \node [squarednode] (hubh) at (7,0){Hub $H$};
        
        \draw[->,thick] (XFRM.east) -| node[left,near end] {$i_1$} (hub1.north);
        \draw[->,thick] (XFRM.east) -| node[left,near end] {$i_2$} (hub2.south);
        \draw[->,thick] (XFRM.east) -| node[left,near end] {$i_3$} (hub3.north);
        \draw[->,thick] (XFRM.east) -| node[left,near end] {$i_4$} (hub4.south);
        \draw[-,thick] (XFRM.east) -- (dots.west);
        \draw[->,thick] (dots.east) -- node[above] {$i_H$} (hubh.west);
        \end{tikzpicture}
    }
    \caption{Case Study~2 network of commercial EV charging hubs, where the  transformer's low-voltage (LV) side is rated at $V_\text{pri}=13.2$kV while the commercial chargers are supplied at $V_\text{sec}=480$V in the secondary network.}
    \label{case2model}
\end{figure}
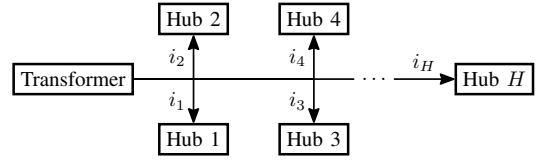

\section{Case Study 2: Commercial fleet charging} \label{sec:hub}

The residential EV optimal charging problem presented in Section~\ref{sec:evc} is just one scenario where coordinating the demand related to EV charging will be useful. As commercial transportation becomes electrified, these vehicle fleets will also benefit from scheduled charging control. In addition, a large proportion of these fleets have predictable routes to and from a central depot. These central depots or ``hubs'' represents local EV charging centers.

The commercial/industrial EV hub charging problem is inherently different from the residential charging problem. For example, the privacy of an individual vehicle in a commercial fleet is not a concern as one company owns and centrally dispatches all EVs in their fleet. The vehicles will have larger batteries and since there are multiple electric vehicles in the hub, each hub represents a much larger total demand on the system than an individual EV from Case Study~1. 

In the formulation below, we represent each hub as a single agent in the system and assume that internal to each hub is an algorithm that distributes allocated hub charging capacity to its individual vehicles. Thus, the hub node needs to aggregate available EV SoC and energy limits to ensure that the hub can meet the underlying, asynchronous EV charging needs. Next, we develop the dynamic model of a single hub and present the distributed optimal charging control problem for a collection of $H$ hubs under a large MVA-scale transformer.
 
\subsection{Hub System Model}

Define hub $h\in\{1,\hdots, H\}$ by a set of $N_h$ assigned vehicles $\mathcal{N}_h$. Each vehicle $n \in \mathcal{N}_h$ has battery capacity $E_{h,n}^{\text{max}}$, is predicted to arrive at time step $\underline{K}_{h,n}$ with arrival energy $\underline s_{h,n} E_{h,n}^{\text{max}}$, and is predicted to leave at time step $\bar K _{h,n}$ with desired minimum departure energy $\bar s_{h,n} E_{h,n}^{\text{max}}$. The value $\underline{s}_{h,n}$~($\bar s_{h,n})$ represents the vehicle's relative SoC at arrival (departure). For each hub and each time step, we then define dynamic sets for arriving, parked, and departing vehicles:
\begin{align}
    \text{Arrive}_{h}(k)=&\{n \in \mathcal{N}_h|  k = \underline{K}_{h,n}\}\\
    \text{Parked}_{h}(k)=&\{n \in \mathcal{N}_h| \underline{K}_{h,n} < k < \bar{K}_{h,n}\}\\
    \text{Depart}_{h}(k)=&\{n \in \mathcal{N}_h|  k = \bar{K}_{h,n}\}
\end{align}
and calculate the arrival and departure energy trajectories 
\begin{align}
    E_{h,\text{arrive}}(k)&=\sum_{n\in\text{Arrive}_{h}(k)}\underline s_{h,n}E_{h,n}^{\text{max}}\\
    E_{h,\text{depart}}(k)&=\sum_{n\in\text{Depart}_{h}(k)}\bar s_{h,n} E_{h,n}^{\text{max}}. 
\end{align}
These trajectories define the amount of energy added and subtracted from the predicted vehicle arrivals and departures, respectively. From the parked vehicles in hub $h$ at time $k$, we can also define the time-varying upper limits on energy
and effective current:
\begin{align}
   E_h^{\text{max}}(k)&=\sum_{n\in\text{Parked}_{h}(k)} E_{h,n}^{\text{max}}. \\
   i_h^{\text{max}}(k)&=\sum_{n\in\text{Parked}_{h}(k)} i_{h,n}^{\text{max}}.
\end{align}
Note that although the maximum current capacity of the charging facility would not change physically, the effective maximum current at time $k$ is a function of the number of parked vehicles:

 Finally, since vehicle $n$ can depart from a hub $h$ with more than its desired departure SoC $\bar s_{h,n} E_{h,n}^{\text{max}}$, we need to account for the difference between the expected departing SoC and actually departing with up to 100\% of SoC:
\begin{align}
  E_{h,\Delta}^{\text{max}}(k)&=\sum_{n\in\text{Depart}_{h}(k)} (1-\bar s_{h,n}) E_{h,n}^{\text{max}}   
\end{align}
From these sets and trajectories, we can now form the hub energy dynamics and optimization.

\subsection{Hub Energy Dynamics and Bounds}
The aggregated SoC for each hub at time $k+1$ is a function of the current delivered over timestep $k$, the expected energy lost from departing vehicles, and the expected energy gained from arriving vehicles. The departed energy from each time step is the expected target SoC for the departing vehicles plus any extra energy provided, $E_{h,\text{depart}} + E_{h,\Delta}$, to bring (some) vehicles above required $\bar s_{h,n}$ to the full SoC. 
\begin{align}
    E_h(k+1)=&E_h(k)+\eta_h i_h(k)+E_{h,\text{arrive}}(k) -\notag\\&(E_{h,\text{depart}}(k)+E_{h,\Delta}(k))\\
    0\leq &E_h(k)\leq E_h^{max}(k) \\
    0\leq&E_{h,\Delta}(k)\leq    E_{h,\Delta}^{max}(k)\\
    0\leq &i_h(k) \leq  i_h^{\text{max}}(k)
\end{align}
\begin{remark}
Note that the hub charging efficiency parameter $\eta_h$ is assumed to be time-invariant (i.e., all vehicles charge with the same efficiency). However, $\eta_h$ could be estimated based on a weighted combination of the efficiencies in Parked$_h(k)$.  
\end{remark}
Next, we present the subtle differences between the hub system and residential charging problems as it relates to the objective function.

\subsection{Objective Function with Hubs}
The local hub objective function is similar to the one in the local (residential) EV scenario. However, now we want to minimize the deviation of the predicted hub energy level to its maximum possible energy state, which is the sum of the energy capacities for all vehicles forecasted to be parked at their respective hubs. In addition, if it is possible, it is desirable to maximize the $E_{h,\Delta}$ terms as they allow the hub to maximize the underlying EV SoC. The weighting factor $o_h$ determines how desirable oversupplying energy is relative to the weights of the other two terms ($q_h$ and $r_h$) and defines objective $h$ as
\begin{align}
   J_h(\mathbf{i}_h,\mathbf{E}_h,\mathbf{E}_{h,\Delta})=\sum_{k=1}^K q_h (E_h(k)- E_h^{\text{max}}(k))^2 + \\r_h (i_h(k))^2 - o_h E_{h,\Delta}(k) \notag
\end{align}
The centralized OCP for a system of hubs is presented next. 


\subsection{Centralized Optimal Control Problem with Hubs}
With the same PWL approximation of the transformer model as in~\eqref{cPWLt}, we can combine the hub dynamics and objective function to yield a centralized hub OCP:
\begin{subequations}
\begin{align}
\min\;\;&\sum_{h=1}^H J_h(\mathbf{i}_h,\mathbf{E}_h,\mathbf{E}_{h,\Delta}) \\
\text{s.t.}\;\;\;	& \notag\\
&E_h(k+1)=E_h(k)+\eta_hi_h(k)\notag +E_{h,\text{arrive}}(k)\\ &\qquad -(E_{h,\text{depart}}(k)+E_{h,\Delta}(k))\\
&T(k+1)=\tau T(k) +\rho T_a(k) \notag \\
& \qquad +\gamma \left(\Delta i \sum_{m=1}^M(2m-1)i_m^\text{PW}(k) \right) \\
&i_d(k)+\sum_{h=1}^H i_h(k)  = \sum_{m=1}^M i_m^\text{PW}(k)  \quad |\: \lambda(k) \\
&0\leq E_{h,\Delta}(k)\leq   E_{h,\Delta}^{\max}(k) \\
&0\leq i_m^\text{PW}(k)\leq\Delta i \quad \forall m=1,\hdots,M\\
&0\leq i_h(k)\leq i_h^{\text{max}}(k) \\
&0\leq E_h(k+1)\leq  E_h^{\max}(k), \,\, E_h(0) = E_{\text{meas},n}\\
&T(k+1)\leq T^\text{max}, \,\, T(0) = T_\text{meas}
\end{align}
\end{subequations}
for all $k=0,\hdots, K-1$ and $h=1,\hdots,H$. 

\subsection{Non-centralized Hub Formulation}
A similar decomposition from Case Study~1 can be used to form the partial Lagrangian,
\begin{align}
\mathcal{L}&(\{\mathbf{i}_h\}_{h=1}^H, \{\mathbf{E}_h,\}_{h=1}^H, \{i_m^\text{PW}\}_{m=1}^M,\mathbf{\lambda})\notag\\ 
&=\sum_{h=1}^H J_h(\mathbf{i}_h,\mathbf{E}_h)+\mathbf{\lambda}^\top\left(\mathbf i_d+\sum_{h=1}^H \mathbf{i}_h -\sum_{m=1}^M \mathbf i_m^\text{PW}
\right) \\
&=\sum_{h=1}^H\left( J_h(\mathbf{i}_h,\mathbf{E}_h)+\mathbf{\lambda}^\top \mathbf{i}_h \right)+\mathbf{\lambda}^\top(\mathbf i_d-\sum_{m=1}^M \mathbf i_m^\text{PW}
)).
\end{align}

Now, the primal problem formulation is in the same form as in Case Study~1 and we can use the same ALADIN, ADMM, and dual decomposition algorithms from Section~\ref{sec:distributed}.
\begin{table}
    \begin{center}
    \caption{Simulation parameters for case study~2}
    \scalebox{0.9}{
         \begin{tabular}{l c c}
         \toprule
         \textbf{Variable} & \textbf{Value} & \textbf{Units} \\
      \midrule
         \multicolumn{3}{c}{\textit{System parameters}} \\
         \midrule
         $H$ & 4 & Hubs \\
         $N_h$ & 100 & EVs \\
         $K$ & 240 & Timesteps\\
         $M$ & 15 & Segments \\
         $\Delta t$ & 180 & Seconds \\
         $ \tilde T_a(k)$ & $[16,18]$ & $^\circ$C \\
         $ c$ & 29.87 & - \\
         $i_d(k)$ & $[52,64]$ & kA \\
         $T^\text{max}$ & 100 & $^\circ$C \\
         $T_0$ & 70 & $^\circ$C \\
         $\gamma, \tau, \rho$ & \{0.000524, 0.9145, 0.0855\}  &  $\{^{^\circ C}/_{(\text{kA})^2},-,-\}$ \\
         $R$ & 480/13200 & \\
          \midrule
         \multicolumn{3}{c}{\textit{Commercial EV hub parameters}} \\
         \midrule
         $i_h^{\text{max}}$ & $[200,1000]$ & A \\
         $\alpha_n$ & $[80, 90]$ & \% \\
         $E_{h,n}^{\text{max}}$ & $[100, 600]$ & kWh \\
         $\underline s_{h,n}$ & $[10, 40]$ & \% \\
         $\bar s_{h,n}$ & $[80, 100]$  & \% \\
         $\underline{K}_{h,n}$ & [20:00, 03:00] & - \\
         $\bar K _{h,n}$ & [04:00, 07:00] & - \\
         $ q_h$, $ r_h$, $ o_h$ & $[0.1, 200]$, $10$, $100$ & -\\
         \bottomrule
        \end{tabular}
    }
    \label{table:simP_c2}
    \end{center}
\end{table}
\subsection{Simulation Setup for Case Study~2}
A hub model simulation was conducted for $H=4$ hubs with $N_h=100$ EVs in each hub. The distribution-level transformer in this scenario is a large 100MVA transformer  with a primary network voltage rating of $V_\text{pri} = 13.2kV$. Within the hubs, the secondary network supplies commercial chargers with RMS voltage at $V_\text{sec}=480V$. Since this scenario focuses on commercial vehicles, the battery capacities have been sized accordingly at 100, 200, or 600kWh with charging rates between 96-480kVA. For simplicity a constant background load of 25-30MVA is used. Table~\ref{table:simP_c2} presents relevant parameters for case study~2. 

\begin{figure}
    \centering
    \includegraphics[width=250pt]{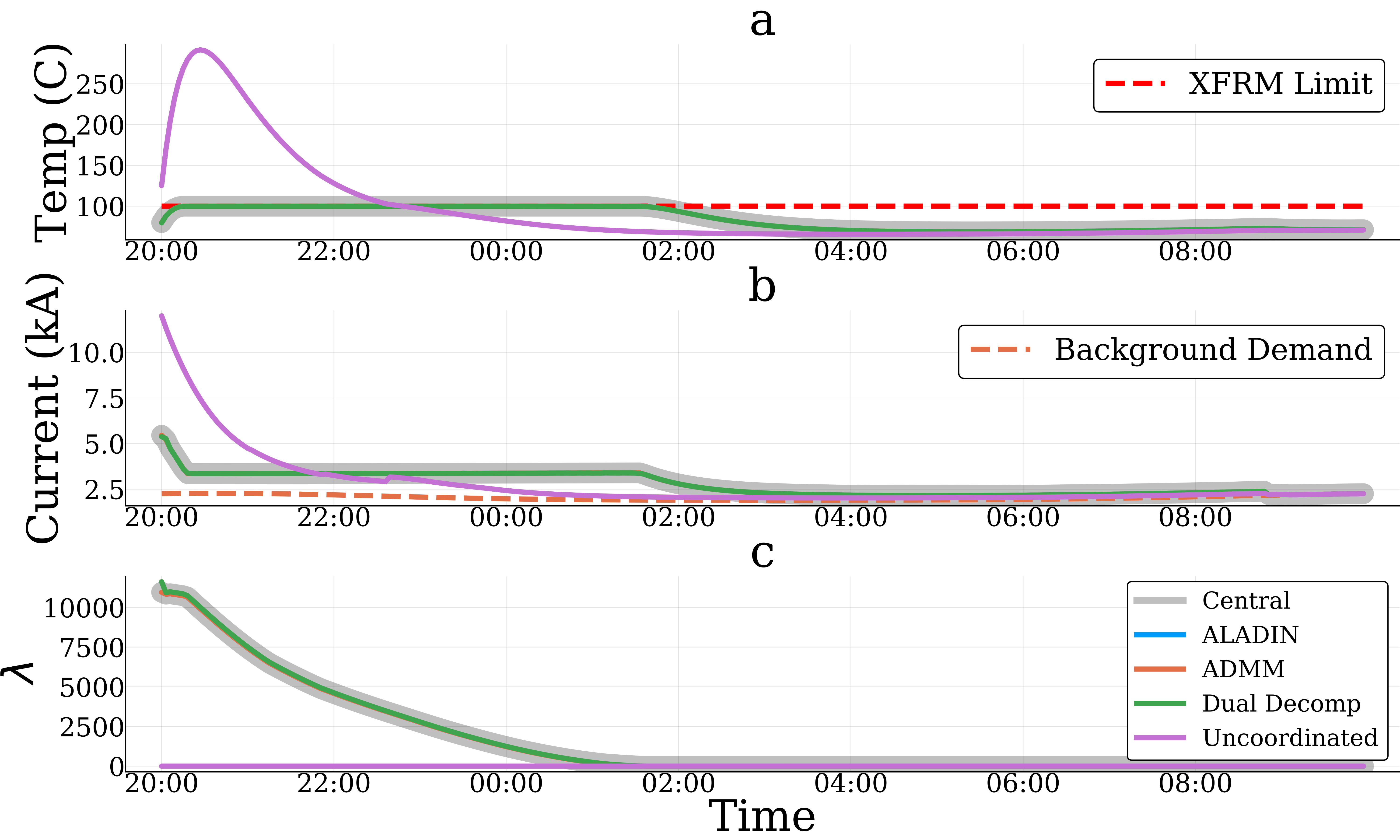}
    \caption{Case Study~2 results for allocating current capacity to hubs. \textit{Top}: Temperature response; \textit{Middle}: Total primary network current demanded from transformer and equal to reflected total secondary current ($R i_\text{total}(k)$); \textit{Bottom}: Dual variable.}
    \label{case2Results}
\end{figure}

\subsection{Discussion of Case Study~2 Results}
The central and uncoordinated results can be seen compared with the optimization algorithms solutions in Fig.~\ref{case2Results}. Once again, ALADIN and ADMM perform  well and match the central solution. Unlike in the residential scenario, PEM is not suitable for managing ON/OFF charging packets for hubs represents the equivalent of up to 100 EVs charging coincidentally, which begets large (bulky) demands on the transformer and makes temperature regulation challenging. Thus, PEM is not part of the commercial hub scenario, except as the possible intra-hub EV charging coordinator that ensures that a hub's aggregate demand is below its optimized static current capacity allocation. The comparison of the four distributed methods in case studies~1 and~2 is next. 

\section{Comparison of Distributed Methods} \label{sec:comp}
\subsection{Privacy, Performance, and Processing}
In Sections~\ref{sec:evc} and \ref{sec:hub}, the results of a receding-horizon implementation of the OCP is presented for each case study. In this section, we discuss how these methods performed and compare each method in terms of privacy (communication), performance (optimality), and processing (computation). 

\subsubsection{Privacy}
Table~\ref{infoTable} shows the information communicated between the EVs, Coordinator, and Transformer. The most valuable information from a consumer standpoint is the current and SoC schedules. While both Dual Ascent and ADMM transfer the current schedule to the coordinator, the coordinator only uses the sum of the current schedules so this sensitive information could be passed through a third party and aggregated first. However, in ALADIN, the individual current schedule is used in the coordinator problem as well as in the gradient. To approximate information requirements per timestep, consider an average number of iterations, a population of $N=100$ EV chargers, and a prediction horizon of 160 timesteps. Then, 
breaking these numbers into the data communicated per time step and per EV, we get 32~bits for PEM, 21~megabits for Dual Decomposition, 3~megabits for ADMM, and 0.6 megabits for ALADIN. That is, PEM and ALADIN require far less data to be communicated than the other two methods.
\begin{table}[htp!]
    \centering
    \caption{Summary of distributed methods - information sharing.}
    \resizebox{\columnwidth}{!}{
    \begin{tabular}{l c c c}
        \toprule
          \textbf{From-to} & \textbf{ALADIN} & \textbf{Dual Ascent / ADMM} & \textbf{PEM}\\
         \midrule
        \begin{tabular}{@{}r@{}}EV to \\ Coordinator \end{tabular} & \begin{tabular}{@{}c@{}}
             $\mathbf{i}_n^{(p)}$,  $\mathbf{g}_{i,n}^{(p)}$, $\mathbf{g}_{s,n}^{(p)}$, \\ $\mathbf{C}_{\overline{i},n}^{(p)}$, $\mathbf{C}_{\underline{i},n}^{(p)}$, $\mathbf{C}_{\overline{s},n}^{(p)}$, $\mathbf{C}_{\underline{s},n}^{(p)}$
        \end{tabular}&$\mathbf{i}_n^{(p)}$ & $\text{Req}_n(k)$\\ [0.5cm] 
        \midrule
        \begin{tabular}{@{}r@{}}Transformer to \\ Coordinator \end{tabular}  &\begin{tabular}{@{}c@{}}
             $\sum_{m=1}^M (\mathbf{i}_m^{PW})^{(p)}$,  $\mathbf{g}_{i,PW}^{(p)}$, \\ $\mathbf{g}_t^{(p)}$,  $\mathbf{C}_{\overline{z}}^{(p)}$, $\mathbf{C}_{\underline{z}}^{(p)}$, $\mathbf{C}_{\overline{t}}^{(p)}$, $\mathbf{C}_{\underline{t}}^{(p)}$
        \end{tabular}& $\sum_{m=1}^M (\mathbf{i}_m^{PW})^{(p)}$&$T(k)$ \\ [0.4cm] 
        \midrule
        \begin{tabular}{@{}r@{}}Coordinator to \\ EV \end{tabular}& $\mathbf\lambda^{(p)}$, $\mathbf{V}_{i,n}^{(p)}$, $\mathbf{V}_{s,n}^{(p)}$ & $\mathbf\lambda^{(p)}$, ($\mathbf{V}_{i,n}^{(p)}$)  & $\text{Resp}_n(k)$ \\ [0.2cm] 
        \midrule
        \begin{tabular}{@{}r@{}}Coordinator to \\ Transformer \end{tabular}  & $\mathbf\lambda^{(p)}$, $\mathbf{V}_t^{(p)}$, $\mathbf{V}_{i,PW}^{(p)}$& $\mathbf\lambda^{(p)}$, ($\mathbf{V}_{i,PW}^{(p)}$)   &  -\\
        \bottomrule
    \end{tabular}}
    \label{infoTable}
\end{table}

\subsubsection{Performance}
A summary of the performance of the four distributed methods is shown in Table~\ref{Tab:perfTable}. Specifically, it compares the 2-norm of the difference between the centralized method's optimal current schedules ($\mathbf{i}_n^\ast$) and dual variables ($\mathbf{\lambda}^\ast$) and the optimized values from the distributed methods. In both case studies, ALADIN and ADMM performed well as their solutions achieved optimality. Dual decomposition does not converge completely in the allotted time and performs worse as a result. The PEM coordinator focuses on feasibility of local and transformer problems with device-driven priorities and has no optimality guarantees; therefore, the difference in the current schedules are more pronounced for case study~1.

\begin{table}[htp!]
    \centering
    \caption{Summary of distributed methods - performance.}
    \scalebox{0.8}{
        \centering
         \begin{tabular}{l cccc}
              \toprule
            \multirow{2}[3]{*}{\textbf{Method} } & \multicolumn{2}{c}{\textbf{2-Norm Current Schedule}} & \multicolumn{2}{c}{\textbf{2-Norm Lambda}} \\
            \cmidrule(lr){2-3} \cmidrule(lr){4-5}
              & Case Study 1 & Case Study 2 & Case Study 1 & Case Study 2 \\ \midrule
             ALADIN &  $1\mathrm{e}{1}$ & $2\mathrm{e}{2}$& 
             $6\mathrm{e}{-4}$& $4\mathrm{e}{-3}$\\
             
             ADMM &  $8\mathrm{e}{1}$  & $3\mathrm{e}{2}$ & $4\mathrm{e}{-3}$ & $3\mathrm{e}{-2}$\\
              
             Dual Decomposition & $2\mathrm{e}{2}$  & $3\mathrm{e}{3}$ & $6\mathrm{e}{-2}$ & $8\mathrm{e}{-1}$\\
             
             PEM & $5\mathrm{e}{3}$  & - & -  & -\\
             \bottomrule
        \end{tabular}
    }
    \label{Tab:perfTable}
\end{table}

\subsubsection{Processing}
The computational efficiency of the methods can be seen in Table~\ref{Tab:compTimeTable}. The average solver time metric describes the average time it takes the algorithm to process for each time step. This number is not necessarily proportional to the average number of iterations shown in the second column as some algorithms require more processing per iteration. The PEM implementation requires least processing as it is an iteration-free approach. ALADIN is the next quickest followed by ADMM and Dual Decompostion. In the implementation, the algorithms have a constraint on the number of iterations due to the duration of each time step. Increasing the number of electric vehicles in the simulation would likely have a similar number of iterations per time step however the performance especially for dual decomposition and ADMM would decrease. It is worth noting that the stopping criteria was different for Case Study~1 and Case Study~2. In addition, the centralized results are only meant to be representative at the proposed scale as direct load control does not scale well in practice when the number of agents (EVs or hubs) or the prediction horizon increases.

\begin{table}[htp!]
    \centering
    \caption{Summary of distributed methods - processing.}
    \scalebox{0.8}{
        \centering
         \begin{tabular}{lcccc} 
            \toprule
             \multirow{2}[3]{*}{\textbf{Method} } & \multicolumn{2}{c}{\textbf{Average Solver Time/Iter. (Sec)}} & \multicolumn{2}{c}{\textbf{Average Iter. to Converge}} \\
            \cmidrule(lr){2-3} \cmidrule(lr){4-5}
              & Case Study 1 & Case Study 2 & Case Study 1 & Case Study 2 \\ \midrule
             Central &  3e-1 & 4e-2 & 1& 1\\
             ALADIN &  2e-1 & 4e-2 & 1.9& 1.4\\
             ADMM &  8e-3  & 1e-2 & 6.9& 18\\
             Dual Decomposition & 2e-3 & 5e-3 & 284& 1000\\
             PEM & 6e-2  & - & 1& -\\
             \bottomrule
        \end{tabular}
    }
    \label{Tab:compTimeTable}
\end{table}

\subsection{Summary of Results}
A qualitative summary of the differences in the distributed methods is shown in Fig.~\ref{spiderComp}. The central formulation gives the optimal solution quickly but gives no privacy and has a high communication overhead at scale. Dual decomposition and ADMM improve on data privacy, but see a significant decrease in the performance and computational efficiency. ALADIN shows the best performance out of the distributed methods but sacrifices privacy. PEM contrasts interestingly with Centralized and offers maximum privacy and speed but without optimality guarantees.

\begin{figure}[t]
    \centering
    \scalebox{0.6}{
        \begin{tikzpicture}
            \tkzKiviatDiagram[scale=0.75,label distance=2cm,radial=3, gap=1, lattice=5]{\vspace{4mm}Performance, Privacy, Processing}
            \tkzKiviatLine[thick,color=orange,fill=orange!20,opacity=.5](5,0,4) 
            \tkzKiviatLine[thick,color=red,fill=red!20](4.5,1,3) 
            \tkzKiviatLine[thick,color=blue,fill=blue!20,opacity=.5](1,4.5,4.5)  
            \tkzKiviatLine[thick,color=brown,fill=brown!20,opacity=.5](2,3.5,0.5)  
            \tkzKiviatLine[thick,color=green,fill=green!20,opacity=.5](3,3,1.5)   
            \node[anchor=south west,xshift=-120pt,yshift=20pt] at (current bounding box.south east) 
                {
                \begin{tabular}{@{}lp{3cm}@{}}
                \ColorBox{orange} & Central \\
                \ColorBox{red} & ALADIN \\
                \ColorBox{green} & ADMM \\
                \ColorBox{brown} & Dual Decomp \\
                \ColorBox{blue} & PEM \\
                \end{tabular}
                };
        \end{tikzpicture}
        }
    \caption{Qualitative relative ranking of the different EVC control methods.}
    \label{spiderComp}
\end{figure}
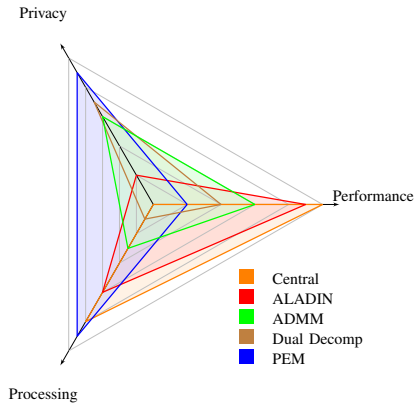

\subsection{Selecting a Suitable Distributed Method}

For the scenario in case study 1, privacy is important as residential EV owners should not need to share their private driving information. Using ALADIN, the coordinator knows the gradients which are a scaled version of the current schedule, which is sensitive data. Due to the large amount of information being shared with the ALADIN algorithm, this may not be the best approach even though it shows the best performance.  Thus, ADMM or PEM are well-suite for residential fleets, such as in case study~1. For commercial fleets, such as in case study~2, where data privacy is less of a priority, ALADIN is a powerful option. Clearly, for a general problem, the order of priorities must be decided before deciding on a specific method.

\section{Conclusion \& Future Work} \label{sec:conc}
Utilities and other entities in the energy industry will soon have to consider the impacts of increased adoption of electric vehicles. Uncoordinated charging could cause overloading of grid transformers as the penetration of electric vehicles increases. We have shown that there are multiple distributed control strategies that could be implemented to avoid costly upgrades of these devices. We have compared the tradeoff of these methods in the areas of privacy (communication), performance (optimality), and processing (computation time). 

Based on two different case studies, we found that 
the proposed and novel suboptimal, but privacy-preserving algorithm PEM might be ideal for an application where privacy is valued, such as a residential EV charging. On the other hand, in a commercial setting, such as the hub charging problem, where performance may be priority, ALADIN is a good choice. 

Since future control methods will require more than EVs to be coordinated and incorporate AC grid reliability constraints, access to and the role of (private) information will be critical. Within this context, we are interested to extend ALADIN and PEM to consider other types of distributed energy resources (DERs) for demand management and adding additional grid constraints via AC optimal power flow problems and investigate the effects of negative background demand due to solar PV and vehicle-to-grid capabilities. By incorporating utility network information into the control algorithms, it also becomes critical to consider the role of cybersecurity, which is an important aspect of data privacy.

\section{Appendix - Proof of Theorem~\ref{thm1}}
The tightness proof for the relaxed transformer dynamics in Theorem~\ref{thm1} relies on KKT analysis. Thus, we first need to define the primal constraints and dual variables of the relevant SOCP formulation of~\eqref{eq:OCP} along with the KKT stationarity conditions. These a presented below before the proof.

\subsection{Primal constraints and dual variables}
Consider the primal SOCP constraints from~\eqref{eq:OCP} with the state of charge limits removed as they are not needed for the conditions in Theorem~\ref{thm1}.
For all $k \in \mathcal{K} \doteq \{0, \hdots, K-1\}$ and $n\in \mathcal{N} \doteq \{1,\hdots, N\}$, the following constraints define primal feasibility and dual variables after  $|$:
{\small 
\begin{subequations}
\begin{align}
0 &= T(k+1) - \tau T(k)  - \eta e(k)   & |\, \lambda_T^{k+1}  \in \mathbb{R}    \,\, \forall k \\
0 &= s_n(k+1) - s_n(k) - \eta_n i_n(k)   & |\, \lambda_{s_n}^{k+1} \in \mathbb{R}   \,\, \forall k, \,\, \forall n \label{eq:SoCDyn}\\
0 &= i_\text{total}(k) - i_d(k) -  \sum_{n=1}^N  i_n(k)  & |\, \lambda_{c}^{k} \in \mathbb{R}   \,\, \forall k\\
0 &\ge T(k+1) - T^\text{max}  & |\, \mu_{T}^{k+1} \ge 0   \,\, \forall k \label{eq:maxTemp} \\
0 &\ge (i_\text{total}(k))^2 - e(k)   & |\, \mu_{e}^{k} \ge 0   \,\, \forall k \label{eq:cvxRelax}\\
0 &\ge i_n(k) - i_n^\text{max}   & |\, \overline{\mu}_{i_n}^{k}  \ge 0   \,\, \forall k, \,\, \forall n \\
0 &\ge 0 - i_n(k)    & |\, \underline{\mu}_{i_n}^{k} \ge 0   \,\, \forall k, \,\, \forall n \\
 0& \ge \bar s_n- s_n(\bar k_n+1) ,   & |\, \bar {\mu}_{s_n}^{\bar{k}_n+1}  \ge 0   \,\, \forall n
\end{align}
\end{subequations}
}
where the last constraint is the QoS guarantee that ensures that vehicle $n$ achieves at least a state-of-charge of $\bar s_n$ by no later than time $\bar k_n+1$. 
Without loss of generality, we can also set $i_d(k) \equiv 0$ and assume $s_n(0)>0$.

\subsection{KKT stationarity conditions}
If we assume Slater's constraint qualification holds\footnote{This is reasonable for the SOCP formulation and equivalent to the existence of a strictly feasible solution where the transformer temperature is not at its limit at all times, i.e., we have some flexibility in the system.}, the stationarity condition $\nabla_{x(k)} \mathcal{L}(x,\lambda, \mu) =0$ has to hold for each variable $x$ at timestep $k$, which gives
{\small
\begin{subequations}
\begin{align}
\nabla_{T(k+1)} \mathcal{L} & \Rightarrow  \lambda_T^{k+1} = \tau \lambda_T^{k+2} - \mu_T^{k+1} \label{eq:Lag_temp}\\
\nabla_{T(K)} \mathcal{L} & \Rightarrow \lambda_T^{K} = - \mu_T^{K} \label{eq:Lag_tempK}\\
\nabla_{e(k)} \mathcal{L} &\Rightarrow 0 =  - \eta \lambda_{T}^{k+1} - \mu_e^{k} \label{eq:Lag_e}\\
\nabla_{i_\text{total}(k)} \mathcal{L} &\Rightarrow 0 =  \lambda_{c}^{k} + 2\mu_e^k i_\text{total}(k) \label{eq:Lag_total}\\ 
\nabla_{i_n(k)} \mathcal{L}&\Rightarrow 0 = 2r_n i_n(k) - \eta_n \lambda_{s_n}^{k+1} - \lambda_{c}^k + \overline{\mu}_{i_n}^{k}-\underline{\mu}_{i_n}^{k} \label{eq:Lag_in}\\
\nabla_{s_n(k+1)} \mathcal{L} &\Rightarrow  \lambda_{s_n}^{k+1} = \lambda_{s_n}^{k+2} + 2q_n(1-s_n(k+1))  
\label{eq:Lag_sn}\\
\nabla_{s_n(\bar k_n+1)} \mathcal{L} &\Rightarrow  \lambda_{s_n}^{\bar k_n+1} = \lambda_{s_n}^{\bar k_n+2}  + 2q_n(1-s_n(\bar k_n+1)) 
+ \bar \mu_{s_n}^{\bar k_n+1} \label{eq:Lag_snKbar}\\
\nabla_{s_n(K)} \mathcal{L} &\Rightarrow \lambda_{s_n}^{K}  = 2q_n(1-s_n(K)). 
\label{eq:Lag_snK}
\end{align}
\end{subequations}
}

Before we can complete the proof, we need help from three technical lemmas that employ the primal and dual relations.

\begin{lemma}\label{lemma:dualTemp}
At optimality, the dual variable, $\mu_e^k$, associated with relaxed quadratic constraint~\eqref{eq:cvxRelax}, satisfies $\mu_e^l \ge \mu_e^k$ for all $l\le k$.
Specifically, if~\eqref{eq:cvxRelax} is strictly active at timestep $k$, then it is strictly active for all prior timesteps. 
\end{lemma}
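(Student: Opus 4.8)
The plan is to isolate $\mu_e^k$ from the $e(k)$-stationarity condition and then propagate information backward in time through the temperature co-state. Condition~\eqref{eq:Lag_e} gives $\mu_e^k=-\eta\,\lambda_T^{k+1}$, so the entire sign pattern and ordering of $\{\mu_e^k\}_k$ is inherited from the temperature adjoint $\{\lambda_T^{k+1}\}_k$, scaled by the positive coefficient $\eta>0$. The first thing I would record is therefore that studying $\mu_e$ is equivalent to studying $\lambda_T$.

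The key step is to eliminate $\lambda_T$ using its own stationarity recursion~\eqref{eq:Lag_temp}, $\lambda_T^{k+1}=\tau\,\lambda_T^{k+2}-\mu_T^{k+1}$. Multiplying by $-\eta$ and recognizing $-\eta\,\lambda_T^{k+2}=\mu_e^{k+1}$ produces the backward recursion $\mu_e^k=\tau\,\mu_e^{k+1}+\eta\,\mu_T^{k+1}$, with terminal value $\mu_e^{K-1}=\eta\,\mu_T^{K}$ supplied by~\eqref{eq:Lag_tempK}. Unrolling it yields the closed form $\mu_e^k=\eta\sum_{j=k+1}^{K}\tau^{\,j-k-1}\,\mu_T^{j}$, which exhibits $\mu_e^k$ as a geometrically discounted accumulation of every future temperature-limit multiplier.

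From this recursion the ``specifically'' clause is immediate, and it is also the property that Theorem~\ref{thm1} ultimately consumes. Dual feasibility gives $\mu_T^{j}\ge0$ and $\mu_e^{j}\ge0$, while $\tau\in(0,1)$ and $\eta>0$; hence $\mu_e^k\ge\tau\,\mu_e^{k+1}$, so $\mu_e^{k+1}>0$ forces $\mu_e^k>0$. A backward induction on $k$ then shows that once~\eqref{eq:cvxRelax} is strictly active at some timestep it is strictly active at all earlier ones. The ordering $\mu_e^l\ge\mu_e^k$ for $l\le k$ I would try to read off the same accumulation formula, comparing the windows $\{l+1,\dots,K\}$ and $\{k+1,\dots,K\}$ term by term.

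I expect the genuine difficulty to be twofold. First, the bookkeeping of the adjoint recursion: $T(k+1)$ enters two consecutive dynamics equalities and the limit inequality, so getting the index shifts and the terminal condition~\eqref{eq:Lag_tempK} exactly right (so the discount exponents in the unrolled sum are correct) is where most errors would creep in. Second, and more delicate, is the ordering direction itself: decreasing $l$ both enlarges the summation window (adding nonnegative mass) and increases every discount exponent (shrinking the carried-over terms), so the inequality does not fall out of monotonicity for free, and one must argue that the added $\mu_T$ mass dominates the discounting. This is the step I would scrutinize most carefully, leaning on the structure that the temperature limit is active only on the relevant block of late timesteps to control how the discounting interacts with the accumulated multiplier mass.
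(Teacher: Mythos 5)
Your route is the paper's route: you unroll the temperature adjoint recursion \eqref{eq:Lag_temp}--\eqref{eq:Lag_tempK} into the closed form $\mu_e^k=\eta\sum_{t=k+1}^{K}\tau^{t-k-1}\mu_T^{t}$ (identical to \eqref{eq:lem1_mue}) and then propagate positivity backward. Your proof of the ``specifically'' clause is complete and correct: the one-step inequality $\mu_e^k=\tau\mu_e^{k+1}+\eta\mu_T^{k+1}\ge\tau\mu_e^{k+1}$ with $\tau\in(0,1)$ gives $\mu_e^{k+1}>0\Rightarrow\mu_e^k>0$, and backward induction finishes it. This is the only part of the lemma that the proof of Theorem~\ref{thm1} actually consumes.

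The ordering claim $\mu_e^l\ge\mu_e^k$ that you leave open deserves the scrutiny you give it, and your instinct that it ``does not fall out of monotonicity for free'' is correct --- in fact it does not fall out at all. The paper's own derivation splits the sum as $\eta\bigl(\sum_{t=l+1}^{k}\tau^{t-l-1}\mu_T^{t}+\sum_{t=k+1}^{K}\tau^{t-k-1}\mu_T^{t}\bigr)$, silently replacing the exponent $t-l-1$ by $t-k-1$ in the tail; this is an equality only when $l=k$. The correct split is $\mu_e^l=\eta\sum_{t=l+1}^{k}\tau^{t-l-1}\mu_T^{t}+\tau^{k-l}\mu_e^k$, from which one gets only $\mu_e^l\ge\tau^{k-l}\mu_e^k$. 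The stronger ordering is false in general: take $\mu_T^{k}=0$ and $\mu_T^{k+1}>0$ (the temperature limit first binds at step $k+1$), which gives $\mu_e^{k-1}=\tau\mu_e^{k}<\mu_e^{k}$. So do not try to make the added $\mu_T$ mass ``dominate the discounting'' --- it need not. The weaker bound $\mu_e^l\ge\tau^{k-l}\mu_e^k>0$ is what is true, and it still delivers the backward propagation of strict activity of \eqref{eq:cvxRelax} that the lemma is used for.
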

\begin{proof}
From recursion on \eqref{eq:Lag_temp} and \eqref{eq:Lag_tempK}, it is trivial to show that for all $k$
\begin{align} \label{eq:Trec}
{\lambda_T^{k+1} = - \sum_{t=k+1}^K \tau^{t-k-1} \mu_T^{t},} 
 \end{align}
where $\mu_T^{k+1}\ge 0\,\, \forall k\in\mathcal{K}$. Substituting $\lambda_T^{k+1}$ from~\eqref{eq:Trec} into~\eqref{eq:Lag_e}, we have that for all $l\le k$
\begin{align} 
\mu_e^l	&= \eta \sum_{t=l+1}^K \tau^{t-l-1} \mu_T^{t} \label{eq:lem1_mue}\\
        &= \eta \left( \sum_{t=l+1}^k \tau^{t-l-1} \mu_T^{t} +  \sum_{t=k+1}^K \tau^{t-k-1} \mu_T^{t}\right) \ge  \mu_e^k.
\end{align}
 Thus, if $\mu_e^k > 0 \Rightarrow  \mu_e^l > 0 \,\, \forall l\le k$, concluding 
 the proof.
\end{proof}

\begin{lemma}\label{lemma:dualSn}
 From \eqref{eq:Lag_sn} and \eqref{eq:Lag_snK} and recursion on $k$, it is trivial to show that for all $k$, 
\begin{align} \label{eq:sn_rec}
{\lambda_{s_n}^{k+1} = \sum_{t=k+1}^K 2q_n(1-s_n(t)) + I_{\bar k, k} \bar \mu_{s_n}^{\bar k_n +1},}
\end{align}
where $I_{\bar k, k} = 1$ if $k \le \bar k_n$ and $ I_{\bar k, k} =0$ otherwise.
\end{lemma}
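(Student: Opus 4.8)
The plan is to establish this closed-form expression by a backward recursion on the dual variable $\lambda_{s_n}^{k+1}$, starting from the terminal timestep $K$ and unwinding toward the generic level $k+1$. The base case is supplied directly by the terminal stationarity condition \eqref{eq:Lag_snK}, which reads $\lambda_{s_n}^{K} = 2q_n(1-s_n(K))$; this matches the claimed formula at $k+1 = K$ with an empty accumulated QoS multiplier. I would then apply the generic recursion \eqref{eq:Lag_sn} repeatedly: each backward step from $\lambda_{s_n}^{k+2}$ to $\lambda_{s_n}^{k+1}$ appends one further term $2q_n(1-s_n(k+1))$ to the running sum, so that after unwinding down to level $k+1$ the accumulated state-deviation contribution is exactly $\sum_{t=k+1}^K 2q_n(1-s_n(t))$. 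This telescoping accounts for the first term in the claim.

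The only nonroutine part is tracking where the QoS multiplier $\bar \mu_{s_n}^{\bar k_n+1}$ enters the recursion. At the single timestep $k+1 = \bar k_n + 1$ the applicable stationarity condition is \eqref{eq:Lag_snKbar} rather than \eqref{eq:Lag_sn}, and it injects the extra additive term $\bar \mu_{s_n}^{\bar k_n+1}$ into the recursion. Because every subsequent backward step (those with $k < \bar k_n$) again uses the homogeneous recursion \eqref{eq:Lag_sn}, this injected term is carried unchanged through all levels $k+1 \le \bar k_n + 1$, i.e.\ for every $k \le \bar k_n$. Conversely, for $k > \bar k_n$ the backward unwinding never reaches the special timestep, so no such term is ever introduced. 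These two cases are precisely what the indicator $I_{\bar k, k}$ encodes, closing the induction.

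I expect the main (though minor) obstacle to be clean bookkeeping at the boundary: ensuring the injection at $\bar k_n+1$ is counted exactly once and only propagates to the timesteps $k \le \bar k_n$, rather than any genuine analytical difficulty. Once the special timestep is isolated and the homogeneous versus inhomogeneous recursions are separated, the formula follows by a straightforward induction on $k$, consistent with the authors' characterization of the result as immediate.
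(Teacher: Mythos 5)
Your proposal is correct and follows exactly the route the paper intends: the paper states the lemma as an immediate consequence of backward recursion on \eqref{eq:Lag_sn} from the terminal condition \eqref{eq:Lag_snK}, with the QoS multiplier injected once at the special timestep via \eqref{eq:Lag_snKbar} and carried through all earlier steps, which is precisely the indicator bookkeeping you describe. No further comparison is needed.
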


\begin{lemma}\label{lemma:SoCseq}
Since $i_n(k)\ge 0$, the sequence $\{s_n(k)\}_{k=1}^{K}$ defined by \eqref{eq:SoCDyn} is clearly non-decreasing for all $n\in \mathcal{N}$. That is, $1\ge s_n(k+1)\ge s_n(k) \ge s_n(0)$ for all $k\in\mathcal{K}$.
\end{lemma}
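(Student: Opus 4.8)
The plan is to read the monotonicity off the state-of-charge recursion directly and then chain the one-step inequalities by a trivial induction. First I would rewrite the equality constraint~\eqref{eq:SoCDyn} in explicit update form, $s_n(k+1) = s_n(k) + \eta_n i_n(k)$, so that the one-step increment is isolated as $s_n(k+1) - s_n(k) = \eta_n i_n(k)$ for every $k \in \mathcal{K}$ and every $n \in \mathcal{N}$.

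Next I would invoke the sign conditions already available in the formulation. The fixed EV parameter satisfies $\eta_n > 0$ (as assumed throughout and in Theorem~\ref{thm1}), and input feasibility gives $i_n(k) \ge 0$ for all $k$. Their product is therefore nonnegative, which immediately yields $s_n(k+1) \ge s_n(k)$ and hence that the sequence $\{s_n(k)\}$ is non-decreasing. This is the single substantive observation; everything else is bookkeeping.

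To obtain the two-sided bound I would proceed by induction on $k$, with base case the measured initial condition $s_n(0) = s_{\text{meas},n}$. The inductive step uses the monotonicity just established to propagate the lower bound forward, giving $s_n(k) \ge s_n(k-1) \ge \cdots \ge s_n(0)$ and therefore $s_n(k) \ge s_n(0)$ for all $k$. The upper bound $s_n(k+1) \le 1$ is not a consequence of the dynamics at all; it is enforced directly by the state box constraint $s_n(k) \in [0,1]$ (equivalently membership in $\mathcal{X}_{n,k}$), so it only needs to be cited rather than derived.

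There is no genuine obstacle here, but the one point worth flagging is that the lower bound $s_n(k) \ge s_n(0)$ does \emph{not} follow from the box constraint alone: it relies essentially on the nonnegativity of the charging current, i.e., on the modeling assumption that the EV can only charge and never discharge. Once that sign is fixed, the claim reduces to a one-line monotonicity statement made rigorous by the induction above, which is exactly why the lemma is asserted as ``clearly'' holding.
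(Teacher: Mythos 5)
Your argument is correct and matches the paper's own reasoning, which is stated inline in the lemma itself: nonnegativity of $\eta_n i_n(k)$ makes the one-step increment nonnegative, the lower bound follows by chaining, and the upper bound is just the box constraint $s_n(k)\in[0,1]$. Nothing further is needed.
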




\subsection{Proof of Theorem~\ref{thm1}}

\begin{proof}(Direct) From Lemma~\ref{lemma:dualTemp}, we just need to show that $\mu_e^k>0$. Thus, first consider~\eqref{eq:Lag_in} for timestep $k$ and substitute for $\lambda_{c}^k$ with \eqref{eq:Lag_total} and $\lambda_{s_n}^{k+1}$ with Lemma~\ref{lemma:dualSn}, which yields 
\begin{align*}
0 &= 2r_n i_n(k) - \eta_n \left( 2q_n\sum_{t=k+1}^K (1-s_n(t)) + I_{\bar k, k} \bar \mu_{s_n}^{\bar k_n +1}\right) \\
& \quad + 2\mu_e^k i_\text{total}(k) + \overline{\mu}_{i_n}^{k} - \underline{\mu}_{i_n}^{k}.
\end{align*}
and $\mu_e^k$ to the left-hand side gives
\begin{align}
 2 \mu_e^k i_\text{total}(k) &= \eta_n \left( 2q_n\sum_{t=k+1}^K (1-s_n(t)) + I_{\bar k, k} \bar \mu_{s_n}^{\bar k_n +1}\right) \notag \\
 & \quad + \underline{\mu}_{i_n}^{k} - 2r_n i_n(k) -  \overline{\mu}_{i_n}^{k}. \label{eq:proofMidstep}
\end{align}
Since the transformer is overloaded due to excessive demand, $i_\text{total}(l) > 0$. Thus, we just need to show that RHS is strictly positive. Before doing so, we first simplify the notation by defining $\alpha_n(k,\bar k_n) \doteq\eta_n I_{\bar k, k} \bar \mu_{s_n}^{\bar k_n +1} + \underline{\mu}_{i_n}^{k} - \overline{\mu}_{i_n}^{k} $ where $\alpha_n(k,\bar k_n)\ge 0$ since $i_n(k)<i_n^\text{max}$.
Clearly, if $r_n=0$, the proof is complete for $s_n(k+1)<1$. However, for $r_n>0$, we need to consider the ratio $q_n/r_n$. Thus, we will use~\eqref{eq:SoCDyn} and replace  $i_n(k)$ with $\frac{1}{\eta_n}(s_n(k+1)-s_n(k))$ in~\eqref{eq:proofMidstep}:
 \begin{align*}
 2 \mu_e^k i_\text{total}(k) 	
					&   = 2q_n \eta_n \sum_{t=k+1}^K (1-s_n(t)) 
					\\ & \quad - 2\frac{r_n}{\eta_n}\left(s_n(k+1) - s_n(k)\right)  + \alpha_n(k, \bar k_n) \notag \\
					& \ge 2q_n \eta_n \left( (1-s_n(k+1))  + \sum_{t=k+2}^K (1-s_n(t))\right)  \\
					& \quad - 2\frac{r_n}{\eta_n}\left(s_n(k+1) - s_n(k)\right),
\end{align*}
where the inequality is due to $\alpha_n(k,\bar k_n) \ge 0$. Further reductions show that $2 \mu_e^k i_\text{total}(k)$
\begin{align*}
&   \ge 2q_n \eta_n (1-s_n(k+1)) - 2\frac{r_n}{\eta_n}\left(s_n(k+1) - s_n(k)\right)  \notag \\
&   \ge 2\frac{r_n}{\eta_n} \left( \frac{q_n}{r_n} \eta_n^2 (1-s_n(k+1)) - s_n(k+1) + s_n(0) \right) \notag \\
&   = 2\frac{r_n}{\eta_n} \left( M_n + s_n(0) - (M_n+1) s_n(k+1) \right) 
\end{align*}
where the last inequality is due to Lemma~\ref{lemma:SoCseq}. For $r_n>0$ and $s_n(k+1) < \frac{M_n + s_n(0)}{M_n+1}$, the RHS is strictly positive, which ensures that $ \mu_e^k > 0$.
Finally, from Lemma~\ref{lemma:dualTemp}, we have $\mu_e^l \ge \mu_e^k > 0 \,\,\forall l\le k$, which completes the proof.

\end{proof}

\section{Appendix - Proof of Corollary~\ref{corol1}}
The corollary refers to strictly active inequality constraints~\eqref{eq:maxTemp} and~\eqref{eq:cvxRelax}, which relates to dual conditions $\mu_T^{k+1}>0$ and $\mu_e^k>0$, respectively.
\begin{proof}
This proof has two parts.
\begin{itemize}
\item (Proving if $k$ is largest timestep to satisfy $\mu_e^k > 0$ then $\mu_T^{k+1}>0$ and $\mu_T^{m}=0\, \forall m>k+1$ ):
Recall that in~\eqref{eq:lem1_mue}, since $\eta, \tau > 0$ and $\mu_T^t\ge 0$, if $\mu_e^k>0$, then $\exists m > k$ such that $\mu_T^{m+1} > 0$. Now, assume $m > k+1$, then 
$$\mu_e^{m} = \eta \sum_{t=m+1}^K \tau^{t-m-1} \mu_T^{t} \Rightarrow \mu_e^{m} > 0.$$
However, since $m>k$ that contradicts with $k$ being the largest integer for which $\mu_e^k > 0$ and, thus,  $\mu_T^{m} =0 \,\,\forall m > k+1$ and $k+1$ is the last instance of $\mu_T^{k+1} >  0$, which implies that $T(k+1)=T^\text{max}$.

\item (Proving if $k+1$ is last timestep with $\mu_T^{k+1} > 0$ then  $\mu_e^k > 0$): if $k+1$ is the last instance of $\mu_T^{k+1} > 0$ then $\mu_e^l > 0 \,\, \forall l\le k$ and, thus, $k$ is the largest integer for which $\mu_e^k>0$ and $e(k) = (i_\text{total}(k))^2$. 
\end{itemize}
This completes the proof.
\end{proof}

Note that in a practical setting, where optimality of EV charging control is not critical, a practitioner could circumvent the complexity of the convex relaxation by augmenting objective function~\eqref{eq:objFun} with a temperature deviation term $-\epsilon (T^\text{max} - T(k+1))$ for arbitrarily small $\epsilon>0$. This incentivizes temperature trajectories far from the temperature limit by embedding a $-\epsilon$ into the RHS of~\eqref{eq:Lag_temp} and~\eqref{eq:Lag_tempK}, which guarantees that $\lambda_T^{k+1}<0\,\, \forall k\in\mathcal{K}$.  From~\eqref{eq:Lag_e}, this yields $\mu_\text{e}^k > 0\,\, \forall k$, which ensures that the convex relaxation is tight for all time steps, regardless of transformer or fleet conditions, and $q_n/r_n$ ratios. The practical impact of using this approach is that for larger $\epsilon>0$, the EV optimal charging schedule embodies a utility-centric, valley-filling policy~\cite{gametheory2,singleControl2}, which competes with that of the QoS-centric objective in~\eqref{eq:objFun} and may negatively impact EV customer satisfaction.

\bibliographystyle{IEEEtran}
\bibliography{mybib}

\begin{thebibliography}{10}
\providecommand{\url}[1]{#1}
\csname url@samestyle\endcsname
\providecommand{\newblock}{\relax}
\providecommand{\bibinfo}[2]{#2}
\providecommand{\BIBentrySTDinterwordspacing}{\spaceskip=0pt\relax}
\providecommand{\BIBentryALTinterwordstretchfactor}{4}
\providecommand{\BIBentryALTinterwordspacing}{\spaceskip=\fontdimen2\font plus
\BIBentryALTinterwordstretchfactor\fontdimen3\font minus
  \fontdimen4\font\relax}
\providecommand{\BIBforeignlanguage}[2]{{%
\expandafter\ifx\csname l@#1\endcsname\relax
\typeout{** WARNING: IEEEtran.bst: No hyphenation pattern has been}%
\typeout{** loaded for the language `#1'. Using the pattern for}%
\typeout{** the default language instead.}%
\else
\language=\csname l@#1\endcsname
\fi
#2}}
\providecommand{\BIBdecl}{\relax}
\BIBdecl

\bibitem{hodges_2018}
\BIBentryALTinterwordspacing
J.~Hodges, ``Electric car costs set to fall,'' Mar 2018. [Online]. Available:
  \url{https://www.bloomberg.com/news/articles/2018-03-22/electric-car-costs-set-to-fall}
\BIBentrySTDinterwordspacing

\bibitem{zart_2018}
\BIBentryALTinterwordspacing
N.~Zart, ``Do electric vehicles have better overall safety? part 2,'' Apr 2018.
  [Online]. Available:
  \url{https://cleantechnica.com/2018/04/01/do-electric-vehicles-have-better-overall-safety-part-2/}
\BIBentrySTDinterwordspacing

\bibitem{osti_1430826}
E.~W. Wood, C.~L. Rames, A.~Bedir, N.~Crisostomo, and J.~Allen, ``{California
  Plug-In Electric Vehicle Infrastructure Projections: 2017-2025 - Future
  Infrastructure Needs for Reaching the State's Zero Emission-Vehicle
  Deployment Goals},'' March 2018.

\bibitem{xfrm1}
A.~D. {Hilshey}, P.~D.~H. {Hines}, P.~{Rezaei}, and J.~R. {Dowds}, ``Estimating
  the impact of electric vehicle smart charging on distribution transformer
  aging,'' \emph{IEEE Transactions on Smart Grid}, vol.~4, no.~2, pp. 905--913,
  June 2013.

\bibitem{ieeestd_annexG}
{IEEE Standard C57.91-2011}, \emph{{IEEE Guide for Loading Mineral-Oil-Immersed
  Transformers}}, Std., 2012.

\bibitem{distControl1}
D.~K. Molzahn, F.~D\"orfler, H.~Sandberg, S.~H. Low, S.~Chakrabarti,
  R.~Baldick, and J.~Lavaei, ``A survey of distributed optimization and control
  algorithms for electric power systems,'' \emph{IEEE Transactions on Smart
  Grid}, vol.~8, no.~6, pp. 2941--2962, Nov 2017.

\bibitem{dualEVC1}
A.~Ghavami, K.~Kar, and A.~Gupta, ``Decentralized charging of plug-in electric
  vehicles with distribution feeder overload control,'' \emph{IEEE Transactions
  on Automatic Control}, vol.~61, no.~11, pp. 3527--3532, Nov 2016.

\bibitem{dualEVC3}
O.~Ardakanian, S.~Keshav, and C.~Rosenberg, ``Real-time distributed control for
  smart electric vehicle chargers: From a static to a dynamic study,''
  \emph{IEEE Transactions on Smart Grid}, vol.~5, no.~5, pp. 2295--2305, Sept
  2014.

\bibitem{admmEVC1}
R.~Carli and M.~Dotoli, ``{A decentralized control strategy for optimal
  charging of electric vehicle fleets with congestion management},'' in
  \emph{2017 IEEE International Conference on Service Operations and Logistics,
  and Informatics (SOLI)}, Sept 2017, pp. 63--67.

\bibitem{admmEVC2}
H.~Fan, C.~Duan, C.~Zhang, L.~Jiang, C.~Mao, and D.~Wang, ``{ADMM-Based
  Multiperiod Optimal Power Flow Considering Plug-In Electric Vehicles
  Charging},'' \emph{IEEE Transactions on Power Systems}, vol.~33, no.~4, pp.
  3886--3897, July 2018.

\bibitem{admmEVC3}
X.~Zhou, P.~Wang, and Z.~Gao, ``Admm-based decentralized charging control of
  plug-in electric vehicles with coupling constraints in distribution
  networks,'' in \emph{2018 37th Chinese Control Conference (CCC)}, July 2018,
  pp. 2512--2517.

\bibitem{admmEVC4}
R.~Carli and M.~Dotoli, ``A decentralized control strategy for optimal charging
  of electric vehicle fleets with congestion management,'' in \emph{2017 IEEE
  International Conference on Service Operations and Logistics, and Informatics
  (SOLI)}, Sep. 2017, pp. 63--67.

\bibitem{admmEVC5}
J.~Rivera, P.~Wolfrum, S.~Hirche, C.~Goebel, and H.~Jacobsen, ``Alternating
  direction method of multipliers for decentralized electric vehicle charging
  control,'' in \emph{52nd IEEE Conference on Decision and Control}, Dec 2013,
  pp. 6960--6965.

\bibitem{compEVC}
W.~Ma, V.~Gupta, and U.~Topcu, ``{On distributed charging control of electric
  vehicles with power network capacity constraints},'' in \emph{2014 American
  Control Conference}, June 2014, pp. 4306--4311.

\bibitem{singleControl1}
S.~Weckx, R.~D’Hulst, B.~Claessens, and J.~Driesensam, ``Multiagent charging
  of electric vehicles respecting distribution transformer loading and voltage
  limits,'' \emph{IEEE Transactions on Smart Grid}, vol.~5, no.~6, pp.
  2857--2867, Nov 2014.

\bibitem{singleControl2}
M.~Liu, P.~K. Phanivong, Y.~Shi, and D.~S. Callaway, ``Decentralized charging
  control of electric vehicles in residential distribution networks,''
  \emph{IEEE Transactions on Control Systems Technology}, vol.~27, no.~1, pp.
  266--281, Jan 2019.

\bibitem{gametheory}
Y.~Zhou, R.~Kumar, and S.~Tang, ``Incentive-based distributed scheduling of
  electric vehicle charging under uncertainty,'' \emph{IEEE Transactions on
  Power Systems}, vol.~34, no.~1, pp. 3--11, Jan 2019.

\bibitem{gametheory2}
Z.~{Ma}, D.~S. {Callaway}, and I.~A. {Hiskens}, ``Decentralized charging
  control of large populations of plug-in electric vehicles,'' \emph{IEEE
  Transactions on Control Systems Technology}, vol.~21, no.~1, pp. 67--78, Jan
  2013.

\bibitem{gametheory3}
L.~{Gan}, U.~{Topcu}, and S.~H. {Low}, ``Optimal decentralized protocol for
  electric vehicle charging,'' \emph{IEEE Transactions on Power Systems},
  vol.~28, no.~2, pp. 940--951, May 2013.

\bibitem{blockchain1}
X.~{Huang}, C.~{Xu}, P.~{Wang}, and H.~{Liu}, ``{LNSC: A Security Model for
  Electric Vehicle and Charging Pile Management Based on Blockchain
  Ecosystem},'' \emph{IEEE Access}, vol.~6, pp. 13\,565--13\,574, March 2018.

\bibitem{blockchain2}
\BIBentryALTinterwordspacing
F.~Knirsch, A.~Unterweger, and D.~Engel, ``Privacy-preserving blockchain-based
  electric vehicle charging with dynamic tariff decisions,'' \emph{Computer
  Science - Research and Development}, vol.~33, no.~1, pp. 71--79, Feb 2018.
  [Online]. Available: \url{https://doi.org/10.1007/s00450-017-0348-5}
\BIBentrySTDinterwordspacing

\bibitem{hub1}
H.~Chen, Z.~Guo, Y.~Xin, Y.~Zhao, and Y.~Jia, ``Coordination of pev charging
  across multiple stations in distribution networks using aggregate pev
  charging load model,'' in \emph{2017 International Smart Cities Conference
  (ISC2)}, Sept 2017, pp. 1--5.

\bibitem{hub2}
Z.~Wei, Y.~Li, Y.~Zhang, and L.~Cai, ``Intelligent parking garage ev charging
  scheduling considering battery charging characteristic,'' \emph{IEEE
  Transactions on Industrial Electronics}, vol.~65, no.~3, pp. 2806--2816,
  March 2018.

\bibitem{evStorage1}
Y.~Ota, H.~Taniguchi, H.~Suzuki, J.~Baba, and A.~Yokoyama, ``Aggregated storage
  strategy of electric vehicles combining scheduled charging and v2g,'' in
  \emph{ISGT 2014}, Feb 2014, pp. 1--5.

\bibitem{evStorage2}
A.~M. Jenkins, C.~Patsios, P.~Taylor, O.~Olabisi, N.~Wade, and P.~Blythe,
  ``Creating virtual energy storage systems from aggregated smart charging
  electric vehicles,'' \emph{CIRED - Open Access Proceedings Journal}, vol.
  2017, no.~1, pp. 1664--1668, 2017.

\bibitem{evFreq}
S.~Izadkhast, P.~Garcia-Gonzalez, and P.~Frías, ``An aggregate model of
  plug-in electric vehicles for primary frequency control,'' in \emph{IEEE
  Power and Energy Society General Meeting (PESGM)}, July 2016, pp. 1--1.

\bibitem{EVC}
R.~Hermans, M.~Almassalkhi, and I.~Hiskens, ``Incentive-based coordinated
  charging control of plug-in electric vehicles at the distribution-transformer
  level,'' in \emph{2012 American Control Conference (ACC)}, June 2012, pp.
  264--269.

\bibitem{ALADIN}
\BIBentryALTinterwordspacing
B.~Houska, J.~Frasch, and M.~Diehl, ``{An Augmented Lagrangian Based Algorithm
  for Distributed Non-convex Optimization},'' \emph{SIAM Journal on
  Optimization}, vol.~26, no.~2, pp. 1101--1127, 2016. [Online]. Available:
  \url{https://doi.org/10.1137/140975991}
\BIBentrySTDinterwordspacing

\bibitem{pemEVC}
P.~Rezaei, J.~Frolik, and P.~D.~H. Hines, ``Packetized plug-in electric vehicle
  charge management,'' \emph{IEEE Transactions on Smart Grid}, vol.~5, no.~2,
  pp. 642--650, March 2014.

\bibitem{almassalkhi2018chapter}
M.~Almassalkhi, L.~D. Espinosa, P.~D. H.~Hines, J.~Frolik, S.~Paudyal, and
  M.~Amini, \emph{{Asynchronous Coordination of Distributed Energy Resources
  with Packetized Energy Management}}.\hskip 1em plus 0.5em minus 0.4em\relax
  New York, NY: Springer New York, 2018, pp. 333--361.

\bibitem{ups}
{GreenBiz}, ``Curve ahead: The future of commercial fleet electrification,''
  Tech. Rep., Oct 2018.

\bibitem{paulXFRM}
A.~{Seier}, P.~D.~H. {Hines}, and J.~{Frolik}, ``Data-driven thermal modeling
  of residential service transformers,'' \emph{IEEE Transactions on Smart
  Grid}, vol.~6, no.~2, pp. 1019--1025, March 2015.

\bibitem{madsPWLMPC}
M.~R. {Almassalkhi} and I.~A. {Hiskens}, ``{Model-Predictive Cascade Mitigation
  in Electric Power Systems With Storage and Renewables—Part I: Theory and
  Implementation},'' \emph{IEEE Transactions on Power Systems}, vol.~30, no.~1,
  pp. 67--77, 2015.

\bibitem{kit:engelmann18b}
A.~Engelmann, T.~M{\"u}hlpfordt, Y.~Jiang, B.~Houska, and T.~Faulwasser,
  ``Toward distributed {OPF} using {ALADIN}~,'' \emph{IEEE Transactions on
  Power Systems}, vol.~34, no.~1, pp. 584--594, 2019.

\bibitem{kit:engelmann19a}
A.~Engelmann, Y.~Jiang, B.~Houska, and T.~Faulwasser, ``Decomposition of
  non-convex optimization via bi-level distributed {ALADIN}~,'' \emph{IEEE
  Transactions on Network Control Systems}, 2020.

\bibitem{boydDecomp}
S.~Boyd, ``Notes on decomposition methods,'' \emph{Notes for EE364B}, 01 2007.

\bibitem{bertsekas_2015}
D.~P. Bertsekas, \emph{Convex optimization algorithms}.\hskip 1em plus 0.5em
  minus 0.4em\relax Athena Scientific, 2015.

\bibitem{PEM}
M.~Almassalkhi, J.~Frolik, and P.~Hines, ``Packetized energy management:
  Asynchronous and anonymous coordination of thermostatically controlled
  loads,'' in \emph{2017 American Control Conference (ACC)}, May 2017, pp.
  1431--1437.

\bibitem{umassEV}
J.~Wamburu, S.~Lee, P.~Shenoy, and D.~Irwin, ``Analyzing distribution
  transformers at city scale and the impact of evs and storage,'' 06 2018, pp.
  157--167.

\bibitem{evPenetration}
F.~Ahourai, I.~Huang, and M.~A. Al~Faruque, ``Modeling and simulation of the ev
  charging in a residential distribution power grid,'' Nov 2013.

\bibitem{xfrmStep}
H.~{Nordman}, N.~{Rafsback}, and D.~{Susa}, ``Temperature responses to step
  changes in the load current of power transformers,'' \emph{IEEE Transactions
  on Power Delivery}, vol.~18, no.~4, pp. 1110--1117, Oct 2003.

\bibitem{hotspotXFRM}
M.~T. {Isha} and Z.~{Wang}, ``Transformer hotspot temperature calculation using
  ieee loading guide,'' in \emph{International Conference on Condition
  Monitoring and Diagnosis}, April 2008, pp. 1017--1020.

\bibitem{xfrmGuide}
``{IEEE Guide for Loading Mineral-Oil-Immersed Transformers and Step-Voltage
  Regulators},'' \emph{{IEEE Std C57.91-2011 (Revision of IEEE Std
  C57.91-1995)}}, March 2012.

\bibitem{united1991permissible}
\BIBentryALTinterwordspacing
{U.S. Bureau of Reclamation}, \emph{Permissible Loading of Oil-immersed
  Transformers and Regulators}.\hskip 1em plus 0.5em minus 0.4em\relax {U.S.
  Department of the Interior, Bureau of Reclamation}, 1991. [Online].
  Available: \url{https://books.google.com/books?id=3HGOYgEACAAJ}
\BIBentrySTDinterwordspacing

\bibitem{fleetcarma}
\BIBentryALTinterwordspacing
E.~Schmidt, ``{EV clustered charging can be problematic for electrical
  utilities},'' March 2018. [Online]. Available:
  \url{https://www.fleetcarma.com/ev-clustered-charging-can-problematic-electrical-utilities/}
\BIBentrySTDinterwordspacing

\bibitem{teslaWall}
\BIBentryALTinterwordspacing
``Powerwall connector,'' Mar 2019. [Online]. Available:
  \url{https://www.tesla.com/support/home-charging-installation/wall-connector}
\BIBentrySTDinterwordspacing

\bibitem{veicEff}
J.~Sears, D.~Roberts, and K.~Glitman, ``{A comparison of electric vehicle
  Level~1 and Level~2 charging efficiency},'' July 2014, pp. 255--258.

\bibitem{evSales}
\BIBentryALTinterwordspacing
``{Monthly Plug-In EV Sales Scorecard: Historical Charts}.'' [Online].
  Available:
  \url{https://insideevs.com/monthly-plug-in-ev-sales-scorecard-historical-charts/}
\BIBentrySTDinterwordspacing

\end{thebibliography}






\end{document}